\documentclass[sigconf,authorversion=true]{acmart}
\usepackage{graphicx}
\usepackage[show]{chato-notes}
\usepackage{amsmath,amssymb}
\usepackage{epsfig}
\usepackage{algorithm}
\usepackage[noend]{algorithmic}
\usepackage{url}
\usepackage{hyperref}
\usepackage{bbm}
\usepackage[export]{adjustbox}
\usepackage{xspace}
\usepackage{booktabs}
\usepackage{tikz}
\usetikzlibrary{shadows}
\usepackage{xcolor}
\usepackage[framemethod=TikZ]{mdframed}
\usepackage{tkz-berge,float}

\interfootnotelinepenalty=10000


\setlength\floatsep{.05\baselineskip plus 3pt minus 4pt}
\setlength\textfloatsep{.15\baselineskip plus 3pt minus 2pt}
\setlength\intextsep{1.25\baselineskip plus 3pt minus 2 pt}

\newcommand{\spara}[1]{\smallskip\noindent{\bf #1}}


\newtheorem{proposition}{Proposition}


\newtheorem{myexample}{Example}

\newtheorem{problem}{Problem}

\DeclareMathOperator*{\argmax}{arg\,max}
\newcommand{\bigO}{\mathcal{O}}

\newcommand{\realsnn}{\ensuremath{\mathbb{R}_+}\xspace}

\newcommand{\NPhard}{\ensuremath{\mathbf{NP}}-hard\xspace}
\newcommand{\NPcomplete}{\ensuremath{\mathbf{NP}}-complete\xspace}

\newcommand{\refpr}[1]{Problem~\ref{prb:#1}}
\newcommand{\reffig}[1]{Figure~\ref{fig:#1}}

\newcommand{\reftab}[1]{Table~\ref{tab:#1}}
\newcommand{\refsec}[1]{Section~\ref{sec:#1}}

\newcommand{\diffnet}[2]{\ensuremath{G^{#1-#2}}\xspace}

\renewcommand{\vec}[1]{\mathbf{#1}}

\newcommand{\dataset}{\ensuremath{\mathcal{D}}\xspace}
\newcommand{\A}{\ensuremath{A}\xspace}
\newcommand{\B}{\ensuremath{B}\xspace}
\newcommand{\calA}{\ensuremath{\mathcal{A}}\xspace}
\newcommand{\calB}{\ensuremath{\mathcal{B}}\xspace}
\newcommand{\GA}{\ensuremath{G^\calA}\xspace}
\newcommand{\GB}{\ensuremath{G^\calB}\xspace}
\newcommand{\GTD}{\ensuremath{G^\td}\xspace}
\newcommand{\GASD}{\ensuremath{G^\asd}\xspace}
\newcommand{\eA}{\ensuremath{e^\calA}\xspace}
\newcommand{\eB}{\ensuremath{e^\calB}\xspace}
\newcommand{\wA}{\ensuremath{w^\calA}\xspace}
\newcommand{\wB}{\ensuremath{w^\calB}\xspace}
\newcommand{\wTD}{\ensuremath{w^\td}\xspace}
\newcommand{\wASD}{\ensuremath{w^\asd}\xspace}

\newcommand{\highlight}[1]{\ensuremath{\text{\small\textsf{#1}}}\xspace}
\newcommand{\oqc}{\ensuremath{\text{\small OQC}}\xspace}
\newcommand{\goqc}{\ensuremath{\text{\small GOQC}}\xspace}
\newcommand{\asd}{\ensuremath{\text{\small\textsf{ASD}}}\xspace}
\newcommand{\td}{\ensuremath{\text{\small\textsf{TD}}}\xspace}
\newcommand{\roi}{\ensuremath{\text{ROI}}\xspace}
\newcommand{\mri}{\ensuremath{\text{MRI}}\xspace}

\newcommand{\squishlist}{
 \begin{list}{$\bullet$}
  {  \setlength{\itemsep}{0pt}
     \setlength{\parsep}{3pt}
     \setlength{\topsep}{3pt}
     \setlength{\partopsep}{0pt}
     \setlength{\leftmargin}{2em}
     \setlength{\labelwidth}{1.5em}
     \setlength{\labelsep}{0.5em}
} }
\newcommand{\squishlisttight}{
 \begin{list}{$\bullet$}
  { \setlength{\itemsep}{0pt}
    \setlength{\parsep}{0pt}
    \setlength{\topsep}{0pt}
    \setlength{\partopsep}{0pt}
    \setlength{\leftmargin}{2em}
    \setlength{\labelwidth}{1.5em}
    \setlength{\labelsep}{0.5em}
} }

\newcommand{\squishdesc}{
 \begin{list}{}
  {  \setlength{\itemsep}{0pt}
     \setlength{\parsep}{3pt}
     \setlength{\topsep}{3pt}
     \setlength{\partopsep}{0pt}
     \setlength{\leftmargin}{1em}
     \setlength{\labelwidth}{1.5em}
     \setlength{\labelsep}{0.5em}
} }

\newcommand{\squishend}{
  \end{list}
}

    \renewenvironment{quote}{%
       \list{}{%
         \leftmargin0.2cm   
         \rightmargin\leftmargin
       }
       \item\relax
    }
    {\endlist}

\copyrightyear{2020}
\acmYear{2020}
\setcopyright{acmlicensed}\acmConference[KDD '20]{Proceedings of the 26th ACM SIGKDD Conference on Knowledge Discovery and Data Mining}{August 23--27, 2020}{Virtual Event, CA, USA}
\acmBooktitle{Proceedings of the 26th ACM SIGKDD Conference on Knowledge Discovery and Data Mining (KDD '20), August 23--27, 2020, Virtual Event, CA, USA}
\acmPrice{15.00}
\acmDOI{10.1145/3394486.3403383}
\acmISBN{978-1-4503-7998-4/20/08}

\settopmatter{printacmref=true}
\begin{document}
\fancyhead{}

\title[Explainable Classification of Brain Networks via Contrast Subgraphs]{Explainable Classification of Brain Networks via\\Contrast Subgraphs}

\author{Tommaso Lanciano}
\affiliation{%
	\institution{Sapienza University, Rome, Italy}
}
\email{tommaso.lanciano@uniroma1.it}

\author{Francesco Bonchi}
\affiliation{%
	\institution{ISI Foundation, Turin, Italy}
}
\email{francesco.bonchi@isi.it}

\author{Aristides Gionis}
\affiliation{%
\institution{KTH, Sweden}
}
\email{argioni@kth.se}

\begin{abstract}
Mining human-brain networks to discover patterns that can be used to discriminate
between healthy individuals and patients affected by some neurological disorder,
is a fundamental task in neuro\-science.
Learning simple and interpretable models is as important as mere classification accuracy.
In this paper we introduce a novel approach for classifying brain networks
based on extracting \emph{contrast subgraphs},
i.e., a set of vertices whose induced subgraphs are dense in one class of graphs and sparse in the other.
We formally define the problem and present an algorithmic solution for extracting contrast subgraphs.
We then apply our method to a brain-network dataset consisting of children affected by
Autism Spectrum Disorder and children Typically Developed.
Our analysis confirms the interestingness of the discovered patterns,
which match background knowledge in the neuro\-science literature.
Further analysis on other classification tasks confirm the simplicity,
soundness, and high explainability of our proposal,
which also exhibits superior classification accuracy,
to more complex state-of-the-art methods.
\end{abstract}
\maketitle \sloppy

\section{Introduction}
\label{sec:intro}

The development of magnetic resonance imaging ({\mri}) techniques has paved the way to
\emph{connectomics}~\cite{sporns2005human}, i.e., modeling the brain as a network,
allowing to tackle interesting neuro\-science research questions
as graph-analysis problems~\cite{bullmore2009complex}.

A connectome is a map describing neural connections between brain regions of interest ({\roi}s),
either by observing anatomic fiber density (structural connectome), or
by computing pairwise correlations between time series of activity associated to {\roi}s (functional connectome).
The latter approach, known as functional magnetic resonance imaging (f{\mri}),
exploits the link between neural activity and blood flow and oxygenation,
to associate a time series to each {\roi}.
A brain network can then be defined by creating a link between two {\roi}s that exhibit co-activation,
i.e., strong correlation in their time series.
An important problem in this domain is to identify
connection patterns that might be associated to specific cognitive phenotypes
or mental dysfunctions~\cite{du2018brainclassificationsurvey}.
Given f{\mri} scans of patients affected by a mental disorder and scans of healthy individuals,
the goal is to discover patterns in the corresponding connectomes
that explain differences in the brain mechanism of the two groups.
Identifying such patterns might provide important insight of the disorder and hint
strategies to improve the condition of patients.

\begin{figure}[t!]
	\begin{minipage}[c]{.38\linewidth}
		\begin{tabular}{c}
			\hspace{-4mm}\includegraphics[width= \linewidth]{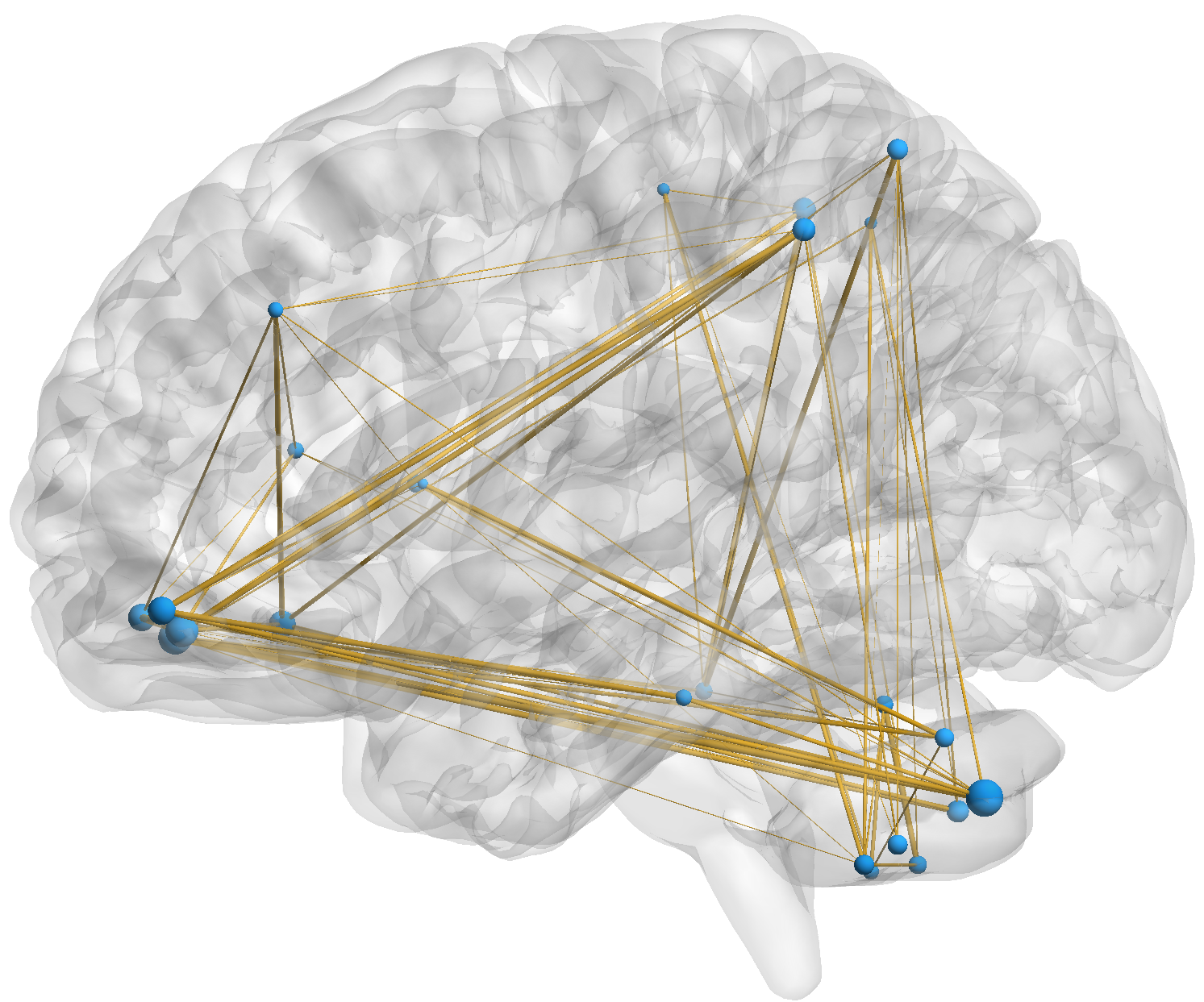}\\
			\hspace{-4mm}\includegraphics[width= \linewidth]{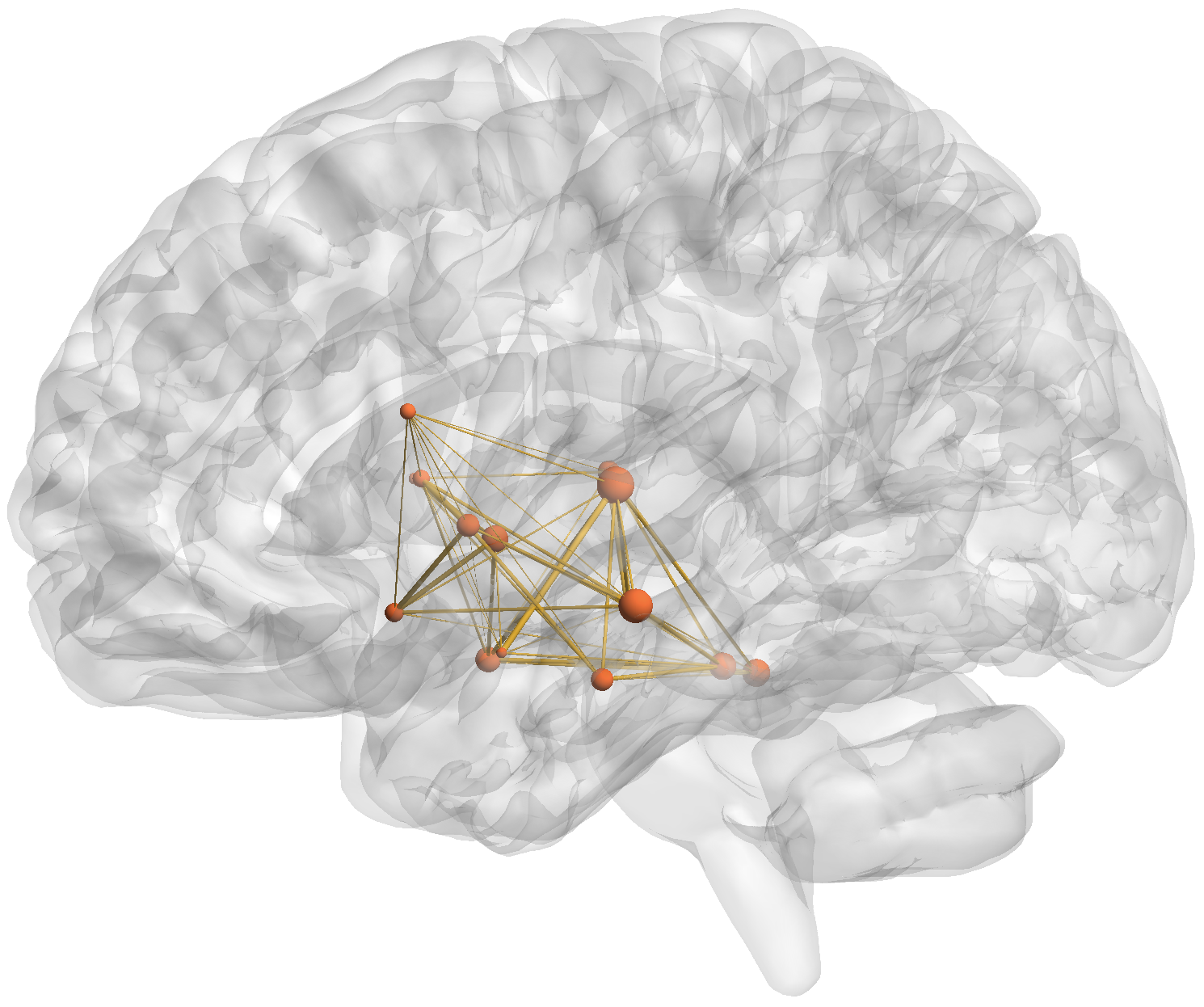}
		\end{tabular}
	\end{minipage}
	\begin{minipage}[c]{.6\linewidth}
		\includegraphics[width= \linewidth]{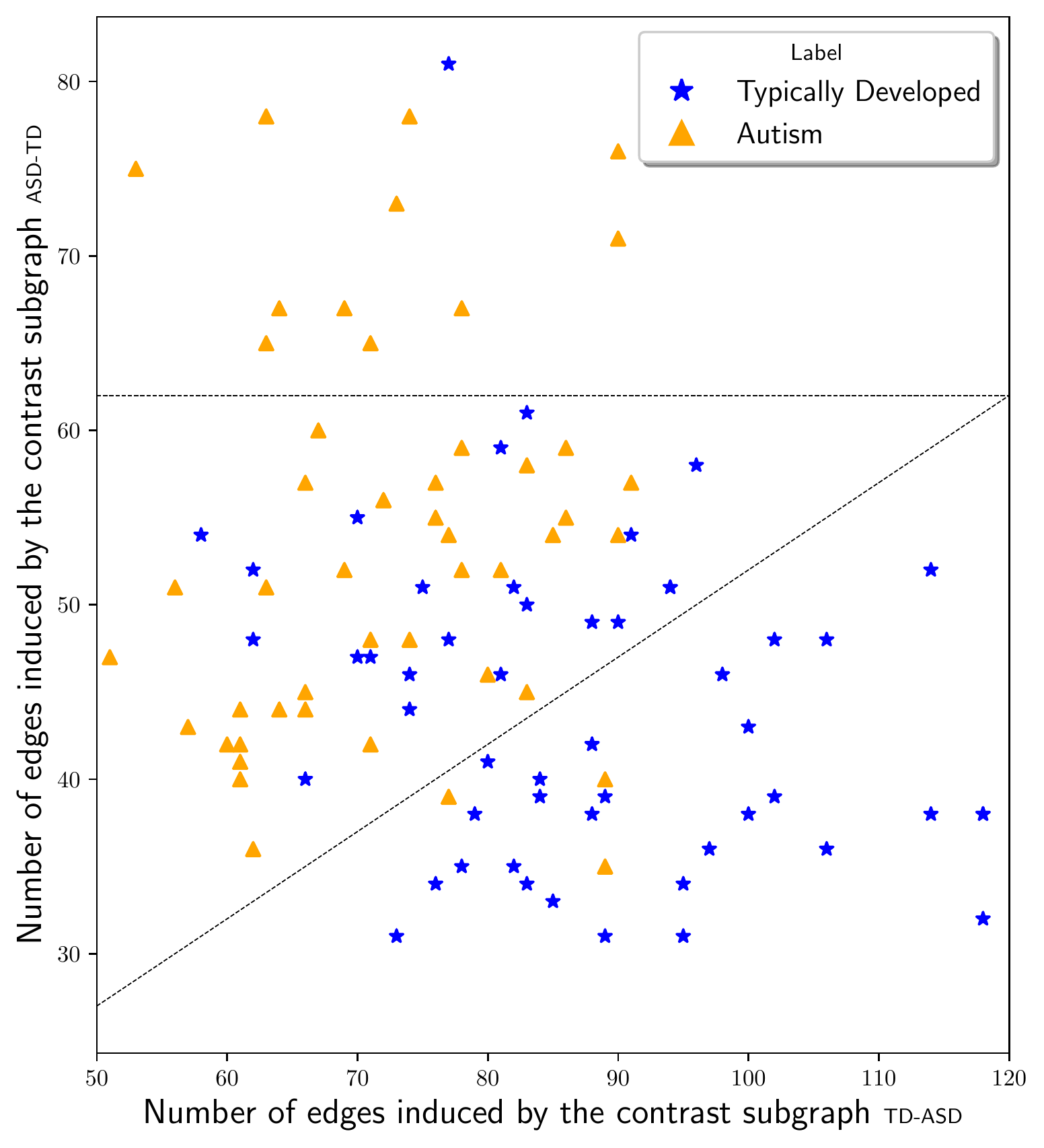}
	\end{minipage}
	\caption{Example of real-world \emph{contrast subgraphs} extracted from a brain-network dataset
	consisting of 49 children affected by Autism Spectrum Disorder (class \asd)
	and 52 Typically Developed (class \td) children.
	Top-left: the contrast subgraph \td-\asd.
	Bottom-left: the contrast subgraph  \asd-\td.
	On the right, scatter plot showing for each individual the number of edges present in the subgraph induced by the contrast subgraph \td-\asd ($x$-axis) and by the contrast subgraph \asd-\td
	($y$-axis). Details are given in Section \ref{subsec:preview} and Section~\ref{sec:experiments_desc}.
	\label{fig:results_large}}
\end{figure}

This task can be seen as a
\emph{graph classification} problem~\cite{wang2017brainclassification,meng2018brainclassification,yan2019groupinn}.
We are given two groups of individuals,
a \emph{condition group}  $\calA$  
and
a \emph{control group} $\calB$. 
Each individual is represented by a brain network, i.e., a graph $G_i=(V,E_i)$,
where each graph is defined over the same set $V$ of vertices
(corresponding to the brain {\roi}s).
The set of edges $E_i$ represents the connections, either structural of functional, between the {\roi}s observed $G_i$.
The goal is to infer a model that, given an unseen graph $G_n=(V,E_n)$,
predicts whether it belongs to class $\calA$ or $\calB$.

Literature on graph classification is mostly based on kernel methods \cite{shervashidze2011weisfeiler},
graph embeddings \cite{narayanan2017graph2vec,adhikari2018sub2vec, gutierrez2019embedding}, and
deep learning \cite{yanardag2015deepWL, lee2018graphclass, wu2019demonet}.
The bulk of these methods for graph classification (with few exceptions discussed in Section \ref{sec:related}), however, is not well-suited for the task of classification in brain networks, for the following reasons.

\begin{description}
  \item[Node-identity awareness:] existing approaches are designed to find structural similarities among the input graphs, without taking into consideration the \emph{correspondence} of nodes among the different networks,
i.e., the fact that a specific vertex id corresponds to the same brain {\roi} in all the input networks.
Not in every application domain it is possible to identify corresponding nodes among
different graphs.
If such a property holds, however, it is crucial to take it into account ---
just ignoring it represents a fatal loss of information.

\smallskip

  \item[Black-box effect:] the inferred models are complex and difficult to understand and to explain
why a certain prediction is made.
\emph{In this specific application domain, the simplicity and explainability
of the models are of uttermost importance.}
In fact, the neuro\-scientist needs to understand which are the {\roi}s and their interactions
that best discriminate between patients and healthy individuals.

\smallskip

  \item[High number of parameters:] a third important limitation is the very high number of parameters that need to be fine tuned,
making the existing approaches too complex to be adopted by non-experts,
especially in application domains where the number of examples is inherently small%
\footnote{Each data point requires an individual undergoing through a scan,
which is an expensive operation.}
and overfitting can be hard to avoid.
\end{description}

\subsection{Our proposal and contributions}
We propose a simple and elegant, yet  effective model,
based on extracting \emph{contrast subgraph},
i.e., a set of vertices whose induced subgraph is dense in the graphs of class $\calA$ and
sparse in the graphs of class $\calB$. 
Our model is extremely simple,
it has only one parameter (governing the complexity of the model), and it has excellent interpretability.
Although it is not the main assessment criteria,
our model also exhibits very good classification accuracy
in several different brain-classification tasks, outperforming more complex models which have several parameters and long training time.

Our main contributions can be summarized as follows:
\begin{itemize}

\item We introduce a novel problem aimed at extracting contrast subgraphs from two classes of graphs,
defined on the same set of vertices (Section \ref{sec:problem}).
We study the complexity of the problem and propose algorithmic solutions (Section \ref{sec:algorithms}).

\smallskip

\item
We apply our method to a brain-network dataset
consisting of children affected by Autism Spectrum Disorder and
children Typically Developed (Section \ref{sec:experiments_desc}).
The analysis of this dataset confirms the interestingness, simplicity and high explainability of the patterns extracted by our methods,
which match domain knowledge in the neuro\-science literature.

\smallskip

\item
We further assess the classification performance of our method
by comparing it against state-of-the-art methods,
on several brain-classification tasks.
The results show that our method, despite its simplicity,
outperforms those more complex methods.

\end{itemize}

We next provide a preview of the type of patterns that our approach can extract from a real-world dataset.

\subsection{A preview of the results}
\label{subsec:preview}
A preview of the type of structures and classification capabilities of our method
is presented in Figure \ref{fig:results_large}.
The data are obtained from the Autism Brain Imagine Data Exchange (ABIDE) project~\cite{craddock2013abide};
more details about data 
are provided in Section~\ref{sec:experiments_desc}.
The specific dataset contains 101 brain networks:
49 patients affected by \emph{Autism Spectrum Disorder} (class \asd) and
52 \emph{Typically Developed} (class \td) individuals.
Each individual is represented by an undirected unweighted graph defined over 116 vertices,
each corresponding to a {\roi}.

On the top-left of Figure \ref{fig:results_large}
we illustrate the contrast subgraph \td-\asd
(the subgraph that maximizes the difference between the number of edges
in the class \td with respect to the same measure for class \asd);
on the bottom-left we illustrate the contrast subgraph \asd-\td.
Vertex size represents the importance of the vertex in discriminating the two classes.
Additionally, the figure depicts the edges whose discrimination capability is above a given threshold.%
\footnote{As we explain in the next section, a contrast subgraph is defined exclusively by a set of vertices. However, in these illustrations we also highlight the most relevant edges.}

By inspecting the two subgraphs,
we can observe an evident complementarity:
\td-\asd involves mainly connections between cerebellum, prefrontal cortex, posterior parietal cortex,
inferior and middle temporal gyri, while
\asd-\td only exhibits connections between striatum and limbic cortex.
These first-sight findings are consistent with recent literature~\cite{khan2015cerebellum,dimartino2011striatum}.
Furthermore, the scatter plot on the right-hand side of Figure \ref{fig:results_large} reports
for each individual (or graph) in the dataset,
the number of edges present in the contrast subgraph \td-\asd ($x$-axis)
and in the contrast subgraph \asd-\td ($y$-axis).
We can see that these very simple and explainable measures, already provide a good separability of the two classes.
For instance, the horizontal dotted line represents the following~rule:

\bigskip
\begin{mdframed}[innerbottommargin=3pt,innertopmargin=3pt,innerleftmargin=6pt,innerrightmargin=6pt,backgroundcolor=gray!10,roundcorner=10pt]
\begin{quote}
	\emph{If an individual exhibits more than 62 edges among the 15 vertices of the contrast subgraph \asd-\td,
	then there are high chances 
	that the individual is affected by \asd.}
\end{quote}
\end{mdframed}
\medskip


\noindent The diagonal dotted line corresponds to another simple rule:

\bigskip
\begin{mdframed}[innerbottommargin=3pt,innertopmargin=3pt,innerleftmargin=6pt,innerrightmargin=6pt,backgroundcolor=gray!10,roundcorner=10pt]
\begin{quote}
\emph{If the number of edges induced by the contrast subgraph \asd-\td is smaller than half of the number of edges induced by the contrast subgraph \td-\asd, then there are high chances 
that the individual is not affected by \asd.}
\end{quote}
\end{mdframed}

Simple rules like these are self-explainable and easily communicable to the neuro\-scientists
for further investigation.

\section{Related work}
\label{sec:related}
\spara{Graph classification.} Graph classification, i.e., the task of building a model able to predict the target class for unseen graphs accurately, is receiving increasing attention as witnessed by the many approaches proposed in the last few years in the literature.

Shervashidze et al.~\cite{shervashidze2011weisfeiler}, propose a graph kernel based on the Weisfeiler-Lehman test of graph isomorphism. Narayanan et al.~\cite{narayanan2017graph2vec} propose \textit{graph2vec}, a method that considers rooted subgraphs as the components that define a specific graph and performs an embedding under this assumption. Adhikari et al.~\cite{adhikari2018sub2vec} propose
\textit{sub2vec}, whose aim is to learn subgraph embeddings, preserving structural and neighborhood properties.
In our experimental comparison in Section \ref{sec:experiments_desc} we compare the classification accuracy obtained by our method against state-of-the-art baselines \cite{shervashidze2011weisfeiler,narayanan2017graph2vec,adhikari2018sub2vec}. As previously discussed, these methods are not node-identity aware: they do not take into consideration the correspondence of nodes in the different networks. Not in every application domain it is possible to identify corresponding nodes among different graphs, but in application domains (such as brain network classification) in which this is possible, not taking it in consideration is an important limitation. To the best of our knowledge, the only unsupervised embedding for networks with node identity awareness was proposed by
Gutierrez et al.~\cite{gutierrez2019embedding}: we also compare with this method in our experiments.

 If the task of graph classification is something that arose recently, application of it to the domain of brain networks is a growing line of research (see \cite{du2018brainclassificationsurvey} for a comprehensive survey). Recently several researchers~\cite{misman2019brainclassification,meng2018brainclassification, wang2017brainclassification} have proposed the use of deep learning architectures for the same problem we tackle in this paper, i.e., brain networks classification. However, these methods suffer from the black-box curse as they have very little interpretability. Moreover,  they have very high number of parameters that need to be fine tuned and long training time.
A recent work by Yan et al.~\cite{yan2019groupinn}, still based on neural networks, also focuses on explainability. Their proposed architecture has a node-grouping layer before the convolutional layer: it is claimed that such node-grouping layer might provide insight in the most predictive brain subnetworks  w.r.t. the classification task at hand. Direct comparison with~\cite{yan2019groupinn} is not doable as their pipeline starts directly from the timeseries associated to the ROIs, and early on splits the computation in two tracks: the positive and the negative correlation track. Instead, the input to the problem studied in this paper, is simply two groups of graphs.


\spara{Contrast and dense subgraphs.} Contrast, sometimes called discriminative, subgraphs have been approached in the graph mining literature mostly as a \emph{frequent subgraph mining} task \cite{Bailey06,LTS,CORK,LEAP}: in this context the discriminative structure is a typically small motif which is defined over node-labelled graphs (e.g., a triangle \textsf{h-h-o}), that appears frequently in one set of graphs but infrequently in another set of graphs. Frequent subgraph discovery usually requires to enumerate all the subgraphs satisfying the search criteria, and  it is more appropriate for applications in bioinformatics or cheminformatics dealing with large databases of proteins, molecules or chemical compounds. Counting frequency requires to solve many subgraph isomorphism instances, and as such it is computationally challenging.
This part of the literature is rather far from the problem studied in this paper, as our structure of interest is a set of vertices whose induced subgraph is dense in a class of graphs and sparse in another: we do not have any notion of frequency, nor we require node-labels, nor need to solve subgraph isomorphism.

Closer to our work is, instead, the literature on dense-subgraph discovery \cite{gionis2015dense}.
Among the many notions of \emph{density} studied in the literature, the most popular one
is the \emph{average-degree density}, as the problem of extracting the subgraph maximizing such density (commonly referred to as the \emph{densest subgraph}) is solvable in polynomial time~\cite{Goldberg84} and admits a linear time $\frac{1}{2}$-approximation algorithm~\cite{AITT00}.
As discussed in Section \ref{sec:algorithms} an alternative notion of dense subgraph is that of optimal quasi-clique introduced by Tsourakakis et al.~\cite{tsourakakis2013denser}, which has been shown to return denser subgraphs and with smaller diameter than the so-called densest subgraph.

Several recent works have dealt with the problem of extracting dense subgraphs from a set of multiple graphs sharing the same vertex set \cite{PeiJZ05,jiang2009mining,boden2012mining,WuZLFJZ16,galimberti2017core}, however none of these deal with extracting contrast or discriminative subgraphs.
The work probably most related to ours is  due to Yang et al.~\cite{YangCZWPC18}, which study the density contrast subgraph, i.e., the set of vertices whose maximize the difference between the density of its induced subgraphs in two input graphs $G^A$ and $G^B$. The work of Yang et al.~\cite{YangCZWPC18} differs from ours as they adopt, as measure of density,  the \emph{average degree}~\cite{Goldberg84} and \emph{graph affinity} \cite{affinity}. They show that their problem is equivalent to the densest subgraph problem with negative edges, which is shown to be  \NPhard and hard to approximate.
In our work, instead of adopting the densest subgraph definition --- as done by Yang et al.~\cite{YangCZWPC18} --- which normalizes by the size of the subgraph $S$, we follow Tsourakakis et al.~\cite{tsourakakis2013denser} and we balance the total edge weight with the term $\alpha \binom{|S|}{2}$. The parameter $\alpha$ also allows us to control the size of the extracted contrast subgraphs. Finally, Yang et al.~\cite{YangCZWPC18} do not consider the case in which the input is constituted by many graphs in two classes and its potential application in graph classification.


\section{Problem statement}
\label{sec:problem}

\spara{Notation.}
We consider a dataset \dataset of observations,
where the $i$-th observation corresponds to a graph $G_i=(V,E_i)$.
Without loss of generality we assume that
all observation graphs are defined over the same set of vertices $V$,%
\footnote{When the observation graphs have different set of vertices, we can consider that they are defined on the union of all vertex sets.}
i.e., brain regions of interest,
while the edge set $E_i$ represents connections between vertices in the observation graph $G_i$.
For sake of simplicity of presentation we consider each $G_i$ to be unweighted, even if the model can straightforwardly deal with weighted input graphs.
The dataset \dataset is divided in two groups:
the \emph{condition group} $\calA=\{G_1^\A,\ldots,G_{r_\A}^\A\}$ and
the \emph{control group} $\calB=\{G_1^\B,\ldots,G_{r_\B}^\B\}$.

We aggregate the information in the groups \calA and \calB
in two \emph{summary graphs} $\GA=(V,\wA)$ and $\GB=(V,\wB)$, respectively.
The summary graphs \GA and \GB are undirected and weighted graphs,
defined over the same set of vertices $V$ as the original observation graphs,
while
$\wA, \wB: V \times V \rightarrow \realsnn$
are weight functions assigning a value to each pair of vertices
and summarizing the edges of the observation graphs in the groups \calA and \calB.

In one particular instantiation, given two vertices $u$ and $v$ in $V$,
we define $\wA(u,v)$ to be the \emph{fraction} of graphs $G_i^\A \in \calA$
in which $u$ is incident to $v$,
that is,
\begin{equation}\label{eq:summary}
\wA(u,v) = \frac{1}{r_{\A}} \left| G_i^\A \in \calA \text{ s.t. }  (u,v)\in E_i^\A \right|,
\end{equation}
and similarly for \wB.
Note that according to this weighting function $\wA(u,v) \in  [0,1]$ with $\wA(u,v) = 0$ denoting the case in which there is no relationship
(i.e., no edge) between $u$ and $v$ in~$\GA$.
Alternative definitions of the weight functions \wA and~\wB are possible.
For instance, if the input graphs are all weighted, with $w^A_i(u,v)$ denoting the weight of the edge $(u,v)$ in $G_i^\A$, then we can define:
\[
\wA(u,v) = \frac{1}{r_{\A}} \sum_{G_i^\A \in \calA} w^A_i(u,v).
\]

As another example, one could require the summary graph itself to be binary, i.e., $\wA(u,v) \in  \{0,1\}$,
for instance by considering if the number of edges present in the observation graphs
is larger than a given threshold.


Given a subset of vertices $S\subseteq V$,
we consider the subgraph induced by $S$ in the summary graphs  \GA and \GB.
We define
\[
\eA(S) = \sum_{u,v \in S} \wA(u,v)
\]
the sum of edge weights in the subgraph of \GA induced by the vertex set $S$.
In the case of summary graphs with binary weights,
the quantity $\eA(S)$
corresponds to the number of edges in the subgraph induced by $S$ in the summary graph \GA.
Analogous definitions apply to \GB.

\spara{Problem definition.}
The basic definition of the contrast subgraph problem requires to find a subset of vertices whose induced subgraph is dense in a summary graph \GA and sparse in summary graph \GB.

\begin{problem}[Contrast subgraph]
\label{prb:contrast}
Given two sets of observation graphs,
i.e.,
the condition group $\calA=\{G_1^\A,\ldots,G_{r_\A}^\A\}$ and
the control group $\calB=\{G_1^\B,\ldots,G_{r_\B}^\B\}$,
and corresponding summary graphs $\GA = (V, \wA)$ and $\GB = (V, \wB)$,
we seek to find a subset of vertices $S^* \subseteq V$ that maximizes the
\emph{contrast-subgraph objective}
\begin{eqnarray*}
\delta(S)
 & = & \eA(S) - \eB(S) - \alpha \binom{|S|}{2} \\
 & = & \sum_{u,v \in S} \left( \wA(u,v) - \wB(u,v) - \alpha \right),
\end{eqnarray*}
where $\alpha \in \realsnn$ is a user-defined parameter.
\end{problem}

The last term of the objective, i.e., $- \alpha \binom{|S|}{2}$,
is a regularization term penalizing solutions of large size:
the larger the value of $\alpha$, the smaller is the optimal contrast subgraph.
Note that, to avoid the na\"ive solution, $S^* = \emptyset$,
we have to ensure that $$0 < \alpha < \max_{u,v \in V} \left(\wA(u,v) - \wB(u,v)\right),$$
otherwise we would encounter the case in which every pair of vertices is detrimental
for the objective function.

\begin{figure}[t]
  \centering
  \includegraphics[width=.7\linewidth]{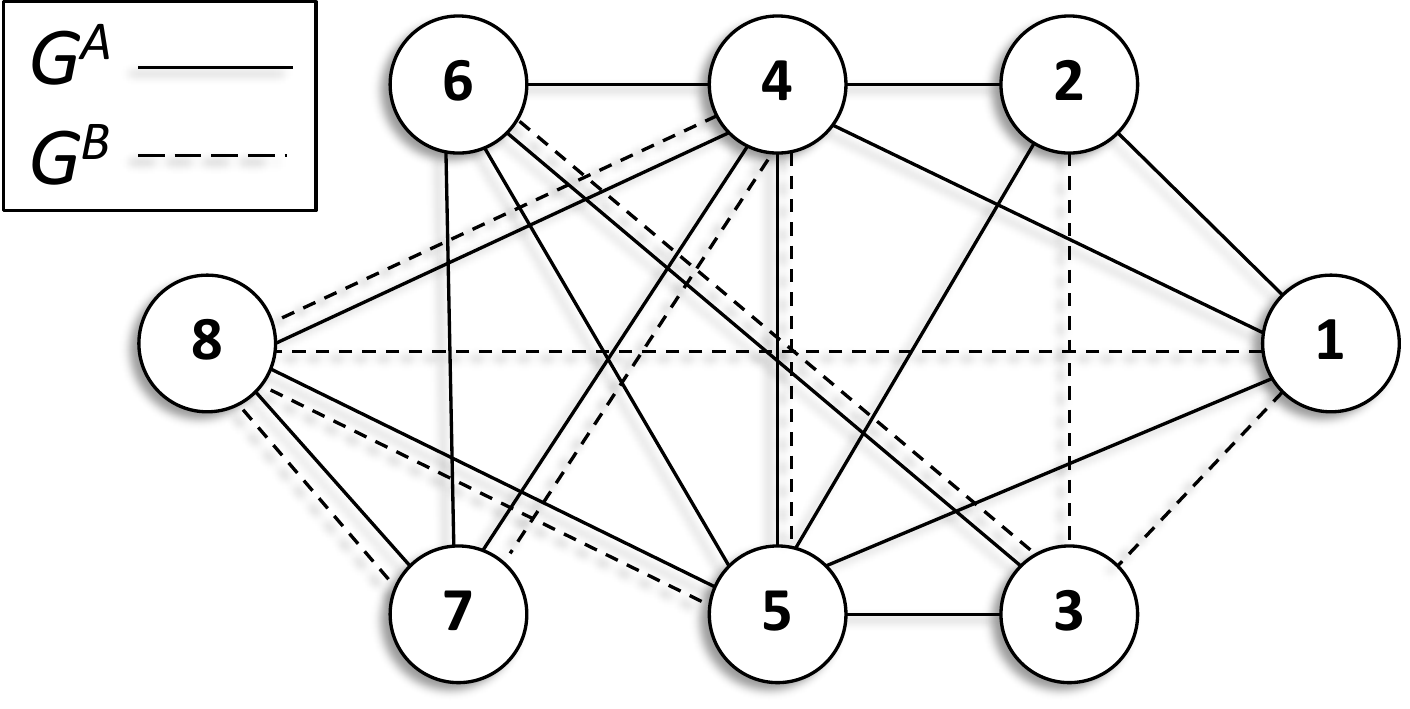}
  \caption{Run-through example of two input graphs \GA and \GB. Please refer to Example \ref{example1},  \ref{example2},  \ref{example3}, and  \ref{example4} for different definitions of contrast subgraphs.
  }\label{fig:example1}
\end{figure}

\begin{myexample}
\label{example1}
Figure \ref{fig:example1} provides an example of two summary graphs \GA and \GB.
In this example we assume that the summary graphs are unweighted, i.e., the edge weight function is binary.
For $\alpha = 0.8$ the contrast subgraph distinguishing \GA from \GB is given by the set of vertices $S_3=\{1,2,4\}$.
In particular, $S_3$ forms a clique in \GA 
and an independent set in \GB. 
The contribution of each edge to the objective function is $1 - 0 - 0.8 = 0.2$,
giving a total value for the objective $\delta(S_3)=0.6$.
Note that the set $S_4=\{1,2,4,5\}$ is a larger clique in \GA,
but it induces an edge in \GB.
In this case, the contribution to the solution $S_4$ is $0.2$ for all $5$ edges appearing only in \GA,
and $-0.8$ for the edge appearing in both \GA and \GB,
giving a total value for the objective $\delta(S_4)=0.2$.
\end{myexample}
\begin{myexample}
\label{example2}
In the example of Figure \ref{fig:example1}, with $\alpha = 0.5$,
the contrast subgraph distinguishing \GA from \GB is given by the set of vertices $S_5=\{1,2,4,5,6\}$.
In this case, the contribution in the objective of each edge appearing only to \GA (there are 7) is
$1 - 0 - 0.5 = 0.5$, while an edge appearing in both graphs (there is only one edge) contributes $-0.5$,
and the same for an edge that does not appear in none graph (there are 2 edges).
Finally, an edge appearing only in \GB (there are no such edge in the subgraph induced by $S_5$)
would contribute $-1.5$.
Thus, the total value of the objective is $\delta(S_5)=2$.
Note that $S_4=\{1,2,4,5\}$ achieves the same score.
\end{myexample}


The last two examples show the function of the parameter $\alpha$, which is that of governing the complexity of the extracted patterns.
As any other parameter governing the complexity of a model, the ``best'' value of $\alpha$ might depend on the specific input dataset, as well as on trading-off different requirements of the analyst: for instance, when explainability is important, a smaller contrast subgraph (larger $\alpha$) might be desirable, but if it is too small might end up being not interesting.

So far we have asked to find the contrast subgraph that distinguishes \GA from \GB.
Note, however, that our objective function is not symmetric with respect to \GA and \GB.
Thus, one can consider the same problem for finding the contrast subgraph that distinguishes \GB from \GA.
Indeed, in our experiments in Section \ref{sec:experiments_desc} we always consider both contrast subgraphs.


\begin{myexample}
\label{example3}
In the example of Figure \ref{fig:example1}, with $\alpha = 0.8$,
the optimal contrast subgraph that distinguishes \GB from \GA is a single edge appearing in \GB and not in \GA.
Thus, any of the following sets is an optimal solution: $\{1,8\}$, $\{1,3\}$ or $\{2,3\}$.
\end{myexample}

\spara{Symmetric variant.}
In some cases
the analyst  may want to find a subgraph having the largest absolute difference, in terms of edge weights,
between \GA and \GB, disregarding on whether the larger weights are on one side or the other.
To address this case, we consider a symmetric variant of Problem~\ref{prb:contrast}.

\begin{problem}[Symmetric contrast subgraph]
\label{prb:contrast_variant}
Given two sets of observation graphs,
i.e.,
the condition group $\calA=\{G_1^\A,\ldots,G_{r_\A}^\A\}$ and
the control group $\calB=\{G_1^\B,\ldots,G_{r_\B}^\B\}$,
and corresponding summary graphs $\GA = (V, \wA)$ and $\GB = (V, \wB)$,
we seek to find a subset of vertices $S^* \subseteq V$ that maximizes the
\emph{contrast-subgraph objective}
\begin{eqnarray*}
\sigma(S)
 & = & \sum_{u,v \in S} \left( \left| \wA(u,v) - \wB(u,v) \right| - \alpha \right),
\end{eqnarray*}
where $\alpha \in \realsnn$ is a user-defined parameter.
\end{problem}


\begin{myexample}
\label{example4}
In Figure \ref{fig:example1}
the optimal symmetric contrast subgraph (Problem \ref{prb:contrast_variant}), for $\alpha = 0.5$, is $\{1,2,3,5\}$.
This vertex set induces a clique in the union of the two summary graphs, such that each edge belongs only to either \GA or \GB, so that every pair of vertices produces a positive contribution to the objective function of Problem \ref{prb:contrast_variant}.
\end{myexample}

\section{Complexity and algorithms}
\label{sec:algorithms}

To tackle the problem of extracting a dense subgraph from a given unweighted graph,
Tsourakakis et al.~\cite{tsourakakis2013denser} introduce the notion of \emph{edge-surplus}
$f_\alpha(S)$ for a subgraph induced by a set of vertices $S \subseteq V$,
which they defined as follows
\begin{equation}\label{eq:surplus}
f_\alpha(S) = \begin{cases}
0, & S = \emptyset ;\\
g(e[S]) - \alpha h(|S|), & otherwise,
\end{cases}
\end{equation}
where $e[S]$ denotes the number of edges in the induced subgraph,
$\alpha > 0$ is a fixed penalty parameter and $g$ and $h$ are two strictly-increasing functions.
The rationale behind the definition of edge surplus is to counterbalance two opposite forces:
the term $g(e[S])$ favors subgraphs with many edges, whereas the term $\alpha h(|S|)$ penalizes large subgraphs.
They propose the problem of extracting the subgraph with maximum edge surplus
as a general class of dense-subgraphs problems.

Tsourakakis et al.~\cite{tsourakakis2013denser} then focus on a specific instance of the general problem,
i.e., $g(x) = x$ and $h(x) = \binom{x}{2}$, which they dub as \textsc{Optimal Quasi-Clique} (\oqc) problem.
They propose an approximation algorithm for the \oqc problem, which was later shown to be \NPhard in \cite{tsourakakis2015}.

More recently, Cadena et al.~\cite{cadena2016dense} extend the \oqc setting,
and consider extracting dense subgraphs in \emph{streaming signed networks for event detection}.
They generalize \oqc to weighted graphs and edge-dependent penalty parameter $\alpha$:

\begin{problem}[Generalized optimal quasi-clique (\goqc) \cite{cadena2016dense}]
\label{prb:edsn}
Given a graph $G=(V,E)$, and functions $w(u,v)$ and $\alpha(u,v)$,
for each pair of vertices $u,v \in V$,
find a subset of vertices $S \subseteq V$ that maximizes
\[
f(S) = \sum_{u,v \in S} w(u,v) - \alpha(u,v).
\]
\end{problem}

Cadena et al.\ prove that Problem \ref{prb:edsn} is \NPcomplete and
\NPhard to approximate within a factor $\bigO(|V|^{1/2 - \epsilon})$ \cite[Theorem~1]{cadena2016dense}.
Then they develop an algorithm using a semidefinite-programming (SDP) based rounding to produce a solution,
which is then refined by the \emph{local-search} procedure of Tsourakakis et al.~\cite{tsourakakis2013denser}.
Their algorithm provides a $\bigO(\log n)$ approximation guarantee,
although in practice the approximation is shown to be much better.

We next show that, although defined in a totally different setting (event detection in a single, streaming, signed network), we can fruitfully make use of the algorithm developed by Cadena et al.\
to solve the contrast subgraph problems we defined in the previous section.
Next proposition provides the mapping between our problems and the \goqc problem.

\begin{proposition}
\label{fact:mapping}
Problems \ref{prb:contrast} and~\ref{prb:contrast_variant}
can be mapped to Problem~\ref{prb:edsn}.
\end{proposition}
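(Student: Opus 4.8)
The plan is to exhibit, for each of Problems~\ref{prb:contrast} and~\ref{prb:contrast_variant}, an explicit instance of the \goqc problem (Problem~\ref{prb:edsn}) whose objective coincides, term by term, with the corresponding contrast-subgraph objective, so that the two optimization problems have literally the same objective value on every vertex set and hence the same optimal solutions. Since \goqc is stated over per-pair functions $w(u,v)$ and $\alpha(u,v)$ rather than over a fixed sparse edge set, the first bookkeeping step is to take the ground graph of the \goqc instance on the same vertex set $V$ (e.g.\ the complete graph on $V$, letting pairs absent from $\GA\cup\GB$ receive weight $0$), after which the edge set itself plays no further role.

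For Problem~\ref{prb:contrast}, I would set, for every pair $u,v\in V$,
$$w(u,v) := \wA(u,v) - \wB(u,v) \qquad\text{and}\qquad \alpha(u,v) := \alpha .$$
Substituting into the \goqc objective gives $f(S) = \sum_{u,v\in S}\bigl(\wA(u,v) - \wB(u,v) - \alpha\bigr) = \delta(S)$ for every $S\subseteq V$, so $\argmax_{S} f(S) = \argmax_{S} \delta(S)$. The only point worth checking is that the \goqc formulation tolerates negative edge weights $w(u,v)$ — which it does, precisely because Cadena et al.\ designed it for \emph{signed} networks — so that the (possibly negative) quantity $\wA(u,v)-\wB(u,v)$ is an admissible weight; note also that the constant penalty $\alpha(u,v)=\alpha$ is an admissible special case of the edge-dependent penalty, and that the empty/singleton sets give objective $0$ in both formulations, so the two problems are aligned on the degenerate cases as well.

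For the symmetric variant (Problem~\ref{prb:contrast_variant}) I would use the identical reduction but with $w(u,v) := \left|\wA(u,v) - \wB(u,v)\right|$ and again $\alpha(u,v) := \alpha$; then $f(S) = \sum_{u,v\in S}\bigl(\left|\wA(u,v)-\wB(u,v)\right| - \alpha\bigr) = \sigma(S)$ for all $S$, and the two problems once more share their optimizers. In this case all the weights are in fact nonnegative, so the instance additionally falls in the ``unsigned'' regime.

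There is no genuine combinatorial difficulty here: the ``hard part'' is merely making the identifications above precise (per-pair weight/penalty functions versus a graph with an edge set; the global parameter $\alpha$ versus the constant choice of the edge-dependent $\alpha(u,v)$; admissibility of negative weights). Once this is done, the reduction is immediate, and because it is objective-preserving it lets us run the SDP-rounding-plus-local-search algorithm of Cadena et al.\ directly on the summary-graph weights $\wA-\wB$ (resp.\ $|\wA-\wB|$) to obtain solutions to Problems~\ref{prb:contrast} and~\ref{prb:contrast_variant}, which is exactly what the remainder of the section will exploit.
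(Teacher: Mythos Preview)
Your proposal is correct and follows essentially the same approach as the paper: you set $\alpha(u,v)=\alpha$ constant and $w(u,v)=\wA(u,v)-\wB(u,v)$ (resp.\ $|\wA(u,v)-\wB(u,v)|$), exactly as the paper does. Your additional remarks about the ground graph, admissibility of negative weights, and degenerate cases are sound and simply make explicit what the paper leaves implicit in its two-line proof.
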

\begin{proof}
The mapping is given by setting
$\alpha(u,v) = \alpha$ and
\[
w(u,v)
 =
\begin{cases}
\wA(u,v) - \wB(u,v) \quad\quad& \text{(Problem}~\ref{prb:contrast}\text{)} \\
|\wA(u,v) - \wB(u,v)|, & \text{(Problem}~\ref{prb:contrast_variant}\text{)}
\end{cases}
\]
for each pair of vertices $u,v \in V$.
\end{proof}

Based on this proposition, the algorithms that we use for our problems are reported in details in Appendix~\ref{appendix:pseudocode}.

\section{Experiments}
\label{sec:experiments_desc}

The aim of our experimental evaluation is to show how contrast subgraphs
can be exploited profitably for finding discriminative patterns between two groups of brain networks.
In particular, our goal is to answer the following questions:

\begin{enumerate}
\item
Can the contrast-subgraph approach be used to identify structural differences between two groups of brain-networks,
which are not easy to detect through standard analysis?
\item
Are the discovered contrast subgraphs interpretable?
\item
Are the results obtained with the contrast-subgraph approach consistent with the neuroscience literature?
\item
Does the approach achieve good performance for the classification tasks considered?
\end{enumerate}

\spara{fMRI and the brain network.}
We start by providing some background on fMRI and brain networks.
The analysis reported in this section are based on \emph{resting-state functional MRI} (rs-fMRI) data.
The rs-fMRI is a technique whose aim is to measure brain activity
in the absence of any underlying controlled experimental paradigm,
exploiting the link between neural activity and blood flow and oxygenation.
When a neuron activates, the blood flow in the corresponding regions increases,
and as a result, oxygen-rich blood displaces oxygen-depleted blood.
Such a variation is measured as a signal, dubbed BOLD (blood-oxygen-level dependent),
and it is what really influences the final magnetic resonance signal.
Biswal et al. \cite{biswal1995fmri} were the first to demonstrate temporal coherence
between several blood fluctuation in the primary sensor motor cortices and,
along with several corroborating results,
prompted researchers to think of the brain as a network,
where nodes represent regions and edges represent functional connectivity measured by correlation.
Since the quantity of signals detected by any MRI is huge,
there is usually a process of aggregating voxels,
so as to reduce the data dimensionality.
Furthermore, since such signals are heavily subject to noise caused by different confounding factors,
a plethora of pre-processing strategies,
either from the signal-processing side (e.g., filtering, signal correction, etc.)
or from the network-analysis side (e.g., thresholding)
have been proposed.
We refer the interested reader to the comprehensive survey of Lang et al.~\cite{lang2012preprocessing}.

\begin{table}[t]
\caption{Datasets used in the experiments. \label{tab:datasets}}
\vspace{-2mm}
	\begin{tabular}{cccc}
	\toprule
		\multicolumn{1}{c}{Dataset} & \multicolumn{1}{c}{Description}  & \multicolumn{1}{c}{\td} & \asd \\ \midrule
		\highlight{Children} & Age $\leq 9$  & 52 & 49 \\
		\highlight{Adolescents} & Age in $[15,20]$  & 121 & 116 \\
		\highlight{EyesClosed} & Eyes closed during scanning  & 158 & 136 \\
		\highlight{Male} & Male individuals & 418 & 420 \\
	\bottomrule
	\end{tabular}
\vspace{2mm}
\end{table}

\spara{Data source.}
For our experimental evaluation
we use a publicly-available data\-set\footnote{\url{http://preprocessed-connectomes-project.org/abide/index.html}}
released by the Autism Brain Imagine Data Exchange (ABIDE) project \cite{craddock2013abide}.
The dataset contains neuroimaging data of 1112 different patients,
573 Typically Developed (\td) and
539 suffering from Autism Spectrum Disorder (\asd).

To extract the data used in this work,
we have followed the preprocessing strategy denoted as DPARSF,%
\footnote{\url{http://preprocessed-connectomes-project.org/abide/dparsf.html}}
followed by Band-Pass Filtering and Global Signal Regression.
To parcellate the brain we adopt the AAL atlas \cite{tzouriomayer2002aal},
which divides the brain into 116 {\roi}s.
The final result of our pre-processing, for a single patient,
is a set of 116 time series (each one associated to a \roi) of length 145.
In order to obtain the brain connectome we compute the pairwise Pearson correlation
between the time series of each pair of {\roi}s,
producing a 116$\times$116 correlation matrix.
Finally, we use a threshold~$t$, whose value is set equal to the 80-th percentile
of the distribution of correlation values,
and we draw an edge for each pair $(u,v)$ having correlation larger than $t$;
these choices are typical in the literature, see for example
the works of Lord et al.~\cite{lord2012percentile} and Rubinov et al.~\cite{rubinov2009percentile}.

The end result of our data processing is an \emph{undirected unweighted graph} for each patient.

It should be noted that fMRI data, in addition to capturing the status of an individual,
are influenced by other phenotypic information,
which for our purposes would result to confounding factors with respect to our target variable,
i.e., the status \td and \asd.
As an effort to mitigate the intrinsic variance and in accordance with neuroscience literature,
we create four different datasets by selecting individuals who share some common characteristics,
such as age, gender, or the condition during the scan, e.g., eyes closed.

More information about the four selected datasets,
and the sizes of the two target classes,
is reported in Table \ref{tab:datasets}.
We remind that in all the datasets the classes of the observations are \td and \asd, the name we use to identify each dataset is the common phenotypic features shared by the observations.
\highlight{Children} contains only individuals whose age is at most 9 years.
\highlight{Adolescents} contains individuals whose age is between 15 and 20 years.
\highlight{EyesClosed} contains individuals who performed their fMRI with eyes closed.
Finally,
\highlight{Male} contains only male individuals.
For all our datasets we consider only individuals for
whom there are no missing observations in the time series,
and we apply the pre-processing described above.
As a result, each individual is represented by an undirected unweighted graph with $|V| = 116$ vertices.

\begin{figure}[t]
\centering
	\includegraphics[width=\linewidth]{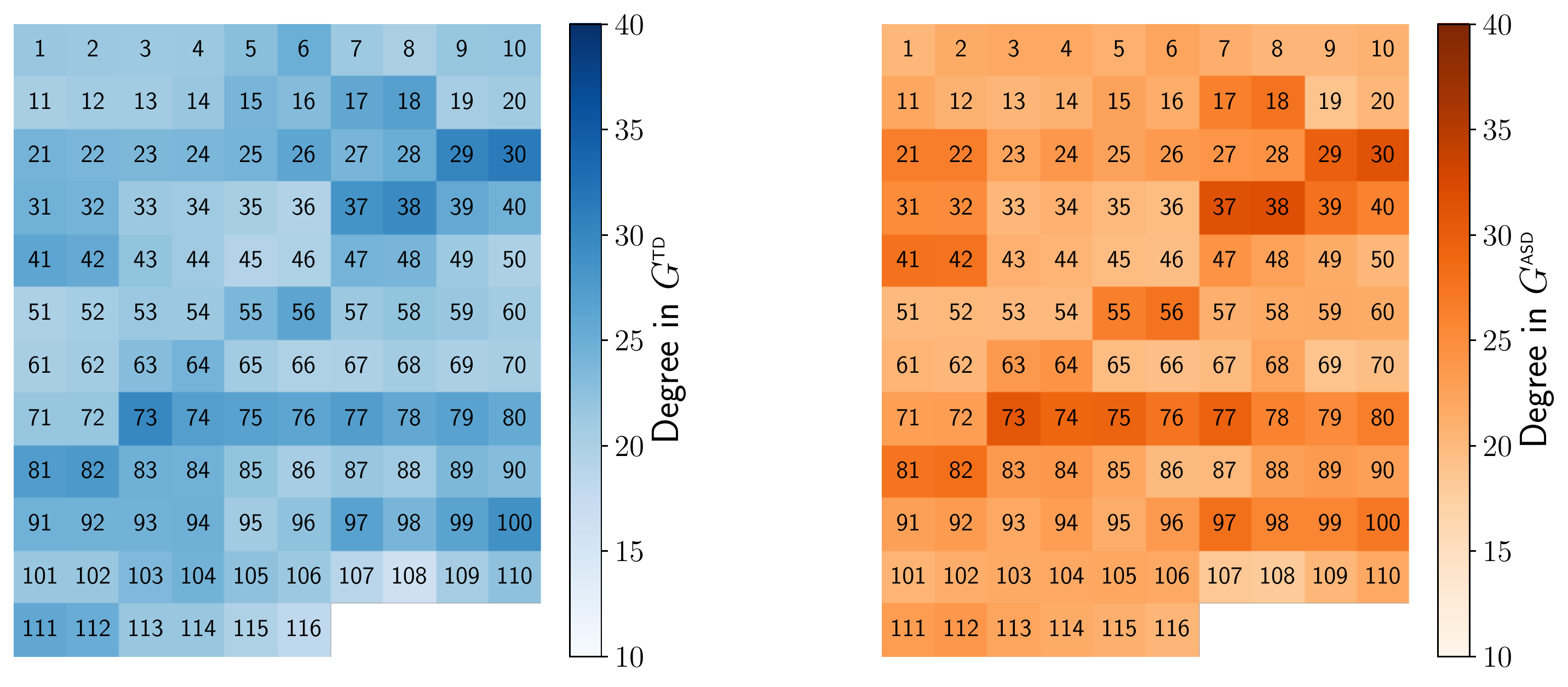}
	\caption{Weighted degree of the 116 nodes (ROIs) in \GTD (left) and \GASD (right). The weighted degree of each vertex is almost identical in the two summary graphs, not revealing any structural difference between the two classes.	\label{fig:deg}}
\end{figure}

\subsection{Characterization}
\label{sec:descriptive}

In this section we characterize the type of information produced by the approach of contrast subgraphs,
focusing on the \highlight{Children} dataset.
A preview of a contrast subgraph for this dataset was already given %
in Figure \ref{fig:results_large} (Section~\ref{subsec:preview}).

The first natural question we consider is whether
the contrast subgraphs capture some simple, first-level information,
which might characterize the two different classes.
To answer this question, we check the two summary graphs
\GTD and \GASD, produced according to Equation (\ref{eq:summary}).
These are two weighted graphs, where edge weights take values in the interval $[0,1]$,
representing the fraction of individuals in the class who exhibit the edge.

In Figure \ref{fig:deg} we report, for each of the 116 {\roi}s, their (weighted) degree in \GTD and \GASD.
The degrees of the nodes in a network is a key feature.
If there was some evident structural difference between \GTD and \GASD,
we would expect to see different with respect to the degrees of their nodes in the two graphs.
Instead, in Figure \ref{fig:deg} we can observe \emph{very similar} degrees for all {\roi}s
in both summary graphs.
In other words, we are not able to identify a few nodes that are ``important''
in \GTD and not in \GASD,
or vice versa.

We thus move our attention to the edge weights in the \emph{difference graph}
$\diffnet{\td}{\asd}= (V, \wTD-\wASD)$,
looking for patterns that are distinct in one network or the other.
The adjacency matrix of $\diffnet{\td}{\asd}$ is reported in \reffig{diff}.%
\footnote{We do not show adjacency matrix of \diffnet{\asd}{\td}
since it would be equal to the one in \reffig{diff} with opposite signs.}
It is particularly interesting to check whether in this adjacency matrix
there are regions with a predominant color.
Such regions could be either single cells or rows/columns,
and they can be either dark red or dark blue.
In the case of single cells, such dark regions would represent
a single edge showing a large weight difference between the two classes.
In the case of rows/columns, such dark regions would represent
{\roi}s with a tendency to be hyper/hypo-connected in a specific class.
However, we observe that none such pattern appears.
Instead, the whole matrix is represented with light-colored cells
(i.e., values are very close to 0),
exhibiting very weak structural difference between the two classes.

\begin{figure}[t]
	\includegraphics[width= .7\linewidth]{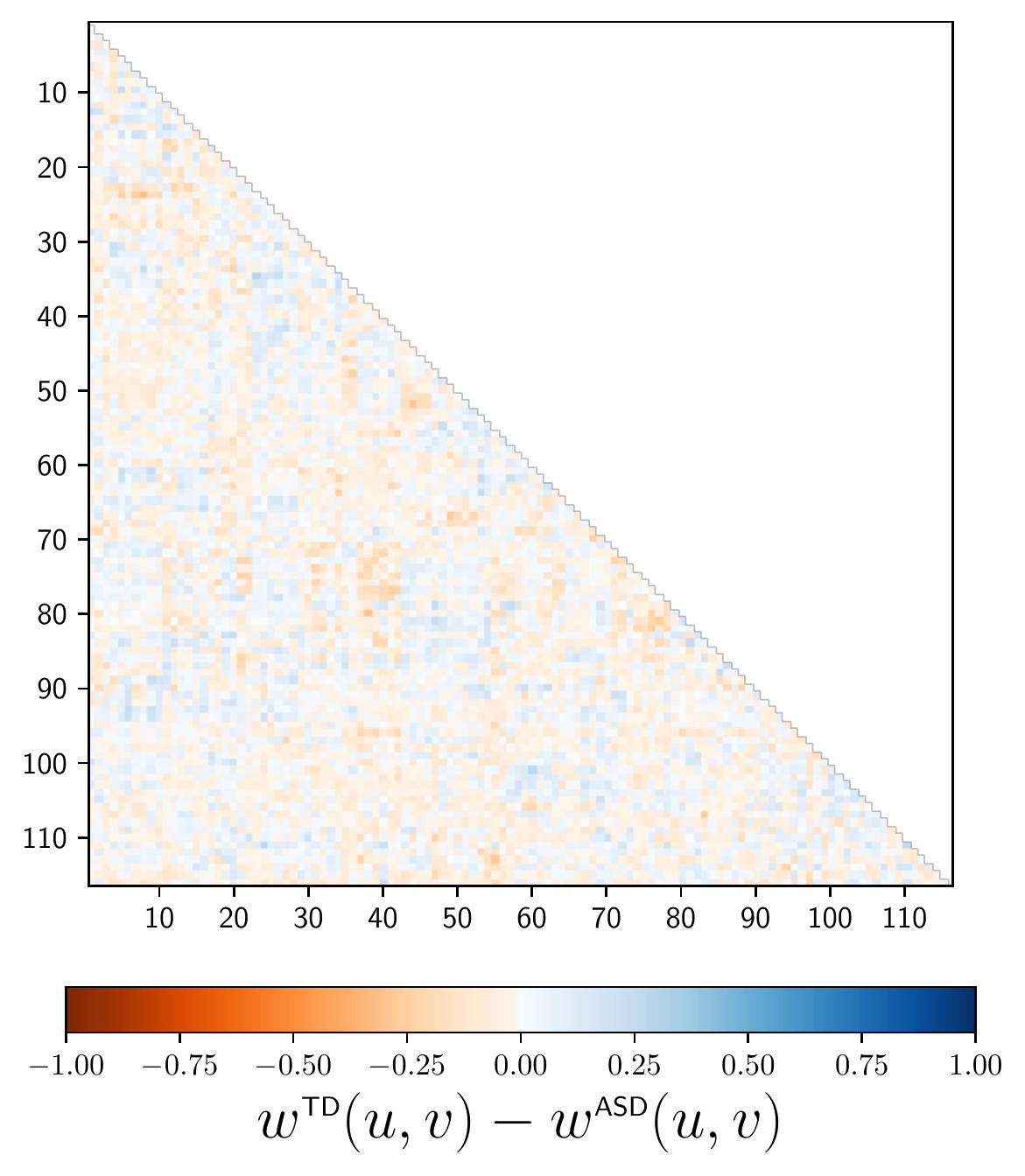}
	\caption{Adjacency matrix of \diffnet{\td}{\asd}.
	Elements of the matrix are colored according \wTD-\wASD.
	Nodes are ordered by their id in the AAL atlas.	\label{fig:diff}}
\end{figure}

\spara{Contrast subgraph (\refpr{contrast}).}
Despite the apparent lack of structural differences between the two classes,
we have seen already in Section \ref{subsec:preview} (Figure \ref{fig:results_large})
an example of a contrast subgraph,
which provides simple discrimination rules between the two classes, \td and \asd.
Another example, obtained using a larger value of $\alpha$ and thus yielding a smaller solution,
is reported in \reffig{results_small}.
In this setting the contrast subgraph \td-\asd (top left)
contains only 7 nodes,
while the contrast subgraph \asd-\td (bottom left) contains only 6 nodes.
As in Figure \ref{fig:results_large},
the size of a node in the contrast subgraph \td-\asd (respectively, \asd-\td)
is proportional to its weighted degree in the \emph{difference graph} $\diffnet{\td}{\asd}$ (respectively, $\diffnet{\asd}{\td}$).
An induced edge $(u,v)$ is depicted in the figure of the contrast subgraph \td-\asd
only if $\wTD-\wASD(u,v) \geq 0.1$ (and similarly for \asd-\td).
Inspection of the data points in \reffig{results_small} motivates us to derive a simple rule:

\medskip
\begin{mdframed}[innerbottommargin=3pt,innertopmargin=3pt,innerleftmargin=6pt,innerrightmargin=6pt,backgroundcolor=gray!10,roundcorner=10pt]
	\begin{quote}
		\emph{If the number of edges induced by the contrast subgraph \asd-\td is smaller than the number of edges induced by the contrast subgraph \td-\asd, then there are high chances that the individual is affected by \asd.}
	\end{quote}
\end{mdframed}
\medskip

\begin{figure}[t!]
	\begin{minipage}[c]{.3\linewidth}
		\begin{tabular}{c}
			\hspace{-4mm}\includegraphics[width= \linewidth]{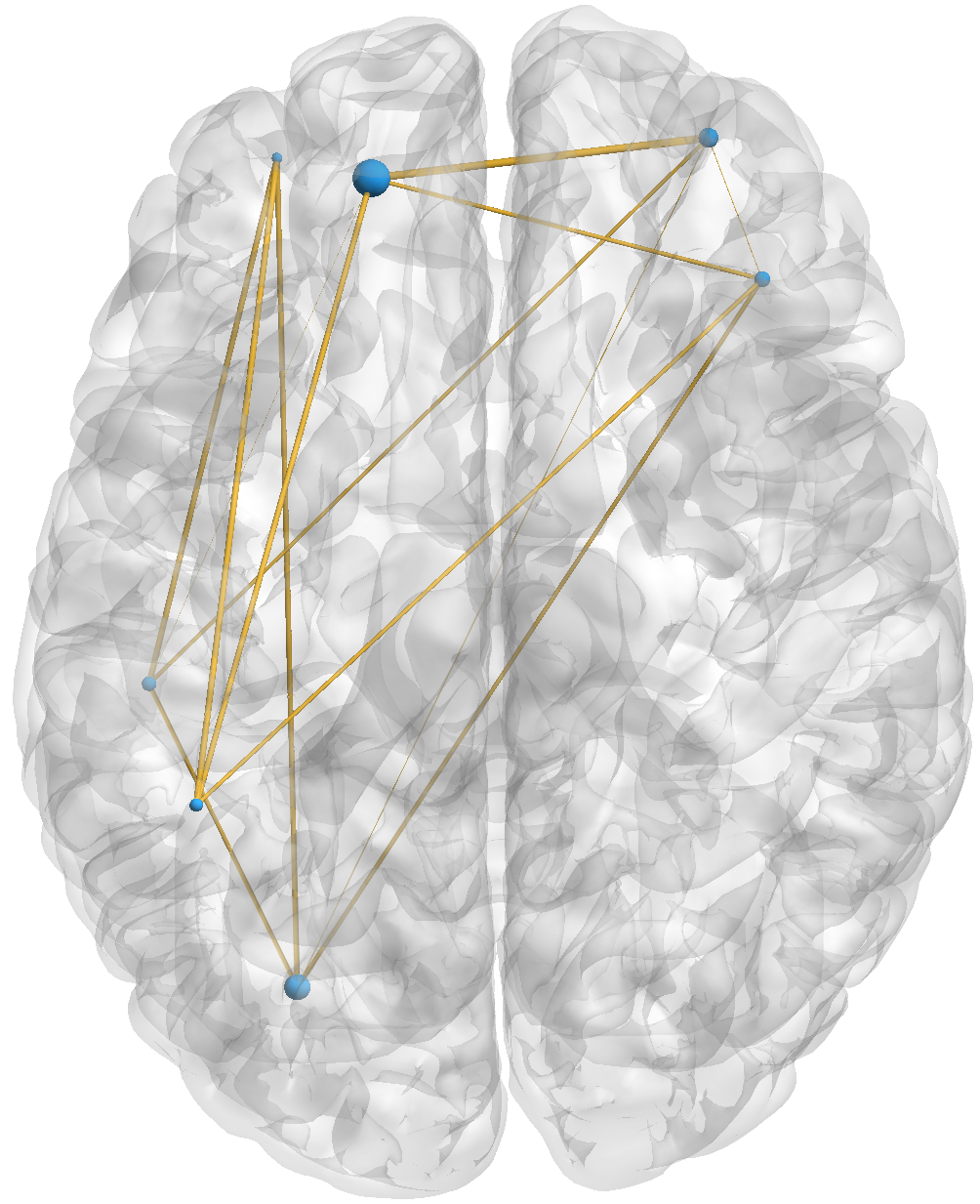}\\
			\hspace{-4mm}\includegraphics[width= \linewidth]{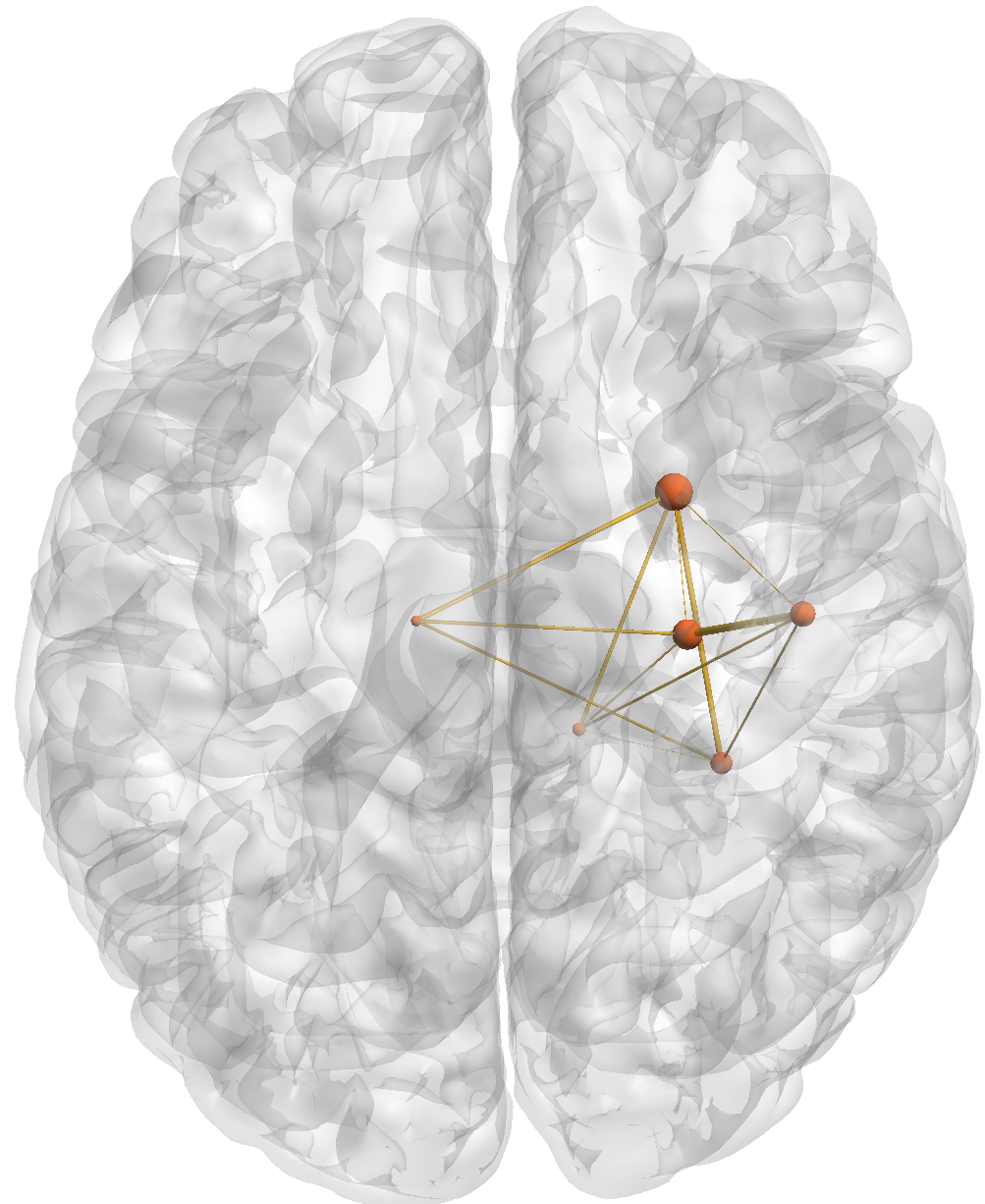}
		\end{tabular}
	\end{minipage}
	\begin{minipage}[c]{.68\linewidth}
		\includegraphics[width= \linewidth]{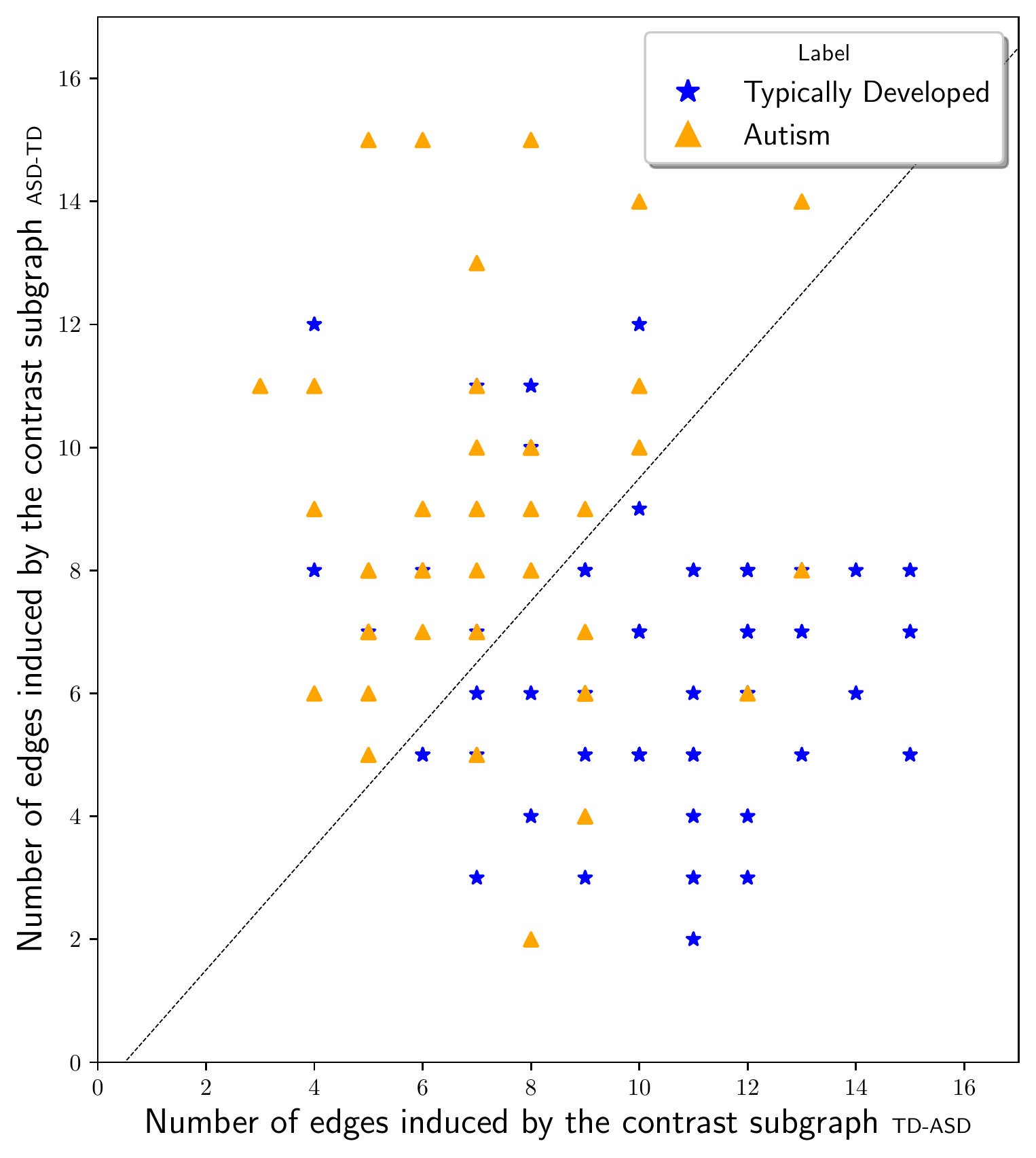}
	\end{minipage}
	\caption{Contrast subgraphs from the \highlight{Children} dataset
	(obtained with a larger value of $\alpha$ w.r.t \reffig{results_large}). Top left: \td-\asd contrast subgraph. Bottom left: \asd-\td contrast subgraph. Size of nodes and thickness of edges are respectively proportional to their degree and weight in the \emph{difference graph}. Edges shown are the ones whose weight is greater than 0.1. On the right, scatter plot showing for each individual the number of edges present in the subgraph induced by the contrast subgraph \td-\asd ($x$-axis) and by the contrast subgraph \asd-\td
	($y$-axis).
\label{fig:results_small}}
\end{figure}

Khan et al.~\cite{khan2015cerebellum} have discovered high connectivity
between cerebellum and prefrontal cortex,
posterior parietal cortex,
inferior and middle temporal gyri,
as a fingerprint distinguishing \td children from children with \asd.
These findings are consistent with the contrast subgraph \td-\asd in \reffig{results_large}.
Di Martino et al.~\cite{dimartino2011striatum} have discovered hyper-connectivity in
\asd children between the regions of striatum and the limbic cortex,
which is consistent with the contrast subgraph \asd-\td in \reffig{results_large}.

Furthermore, by inspecting \reffig{results_small}
we are able to discover another insight:
the ubiquitous presence of {\roi}s belonging to the left hemisphere
in the contrast subgraph {\td}-{\asd}.
This observation, again, is coherent with neuroscience findings:
the effect of some neural function being specialized in only one side of the brain
is called left/right lateralization of the brain.
For example, when it comes to the task of speech production,
it has been shown that most humans exhibit a left lateralization of the brain,
however, it has been reported that \asd patients show significantly
reduced left lateralization in speech-production-related
{\roi}s~\cite{nielsen2014leftimpair, eyler2012leftimpair, kleinhans2008leftimpair},
such as inferior frontal gyrus, inferior parietal lobule, and superior temporal gyrus,
which mostly appear also in the solution we provide.

\begin{figure}[t]
	\begin{minipage}[c]{.3\linewidth}
		\begin{tabular}{c}
			\hspace{-4mm}\includegraphics[width= 1.12\linewidth]{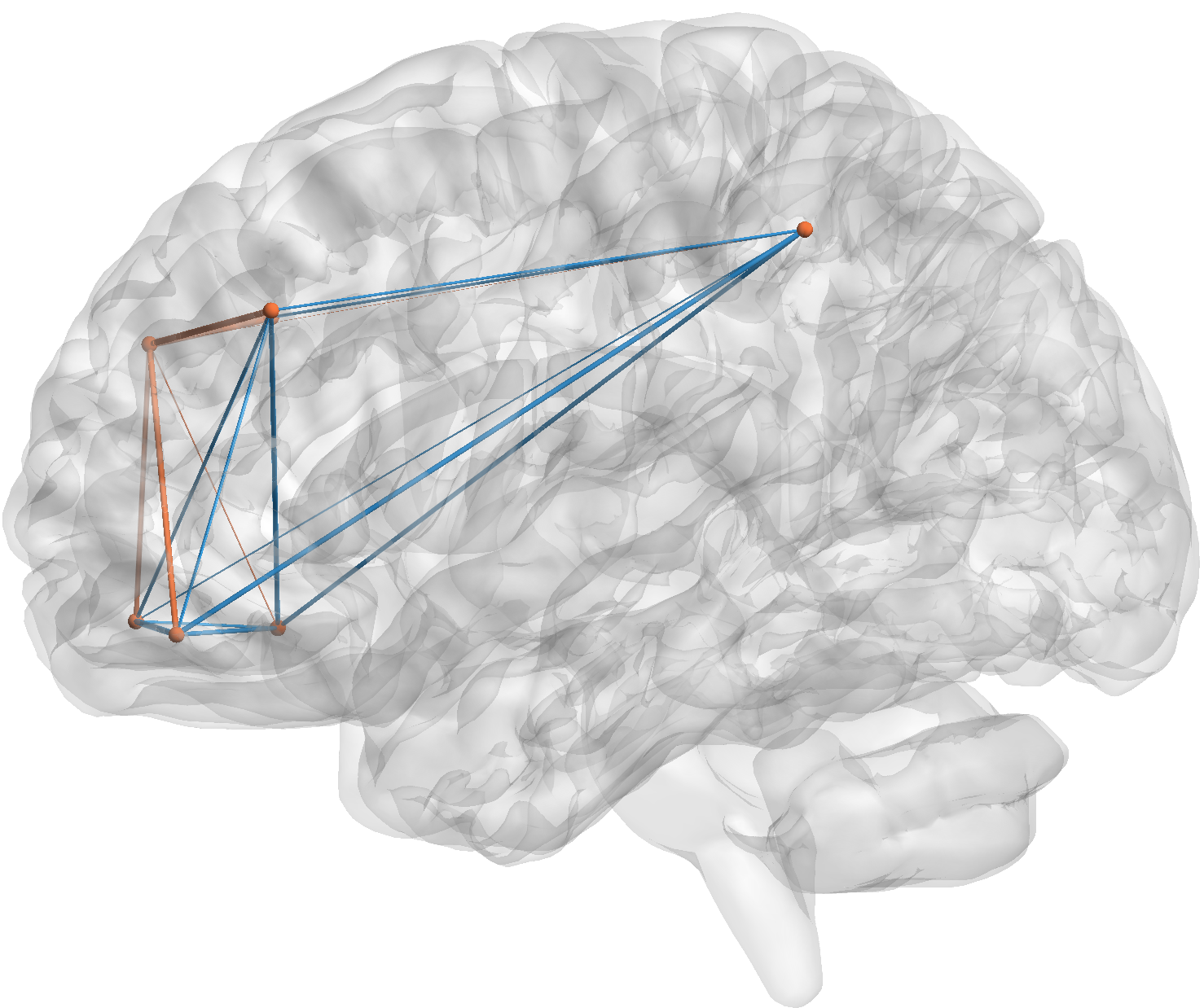}\\
			\hspace{-4mm}\includegraphics[width= \linewidth]{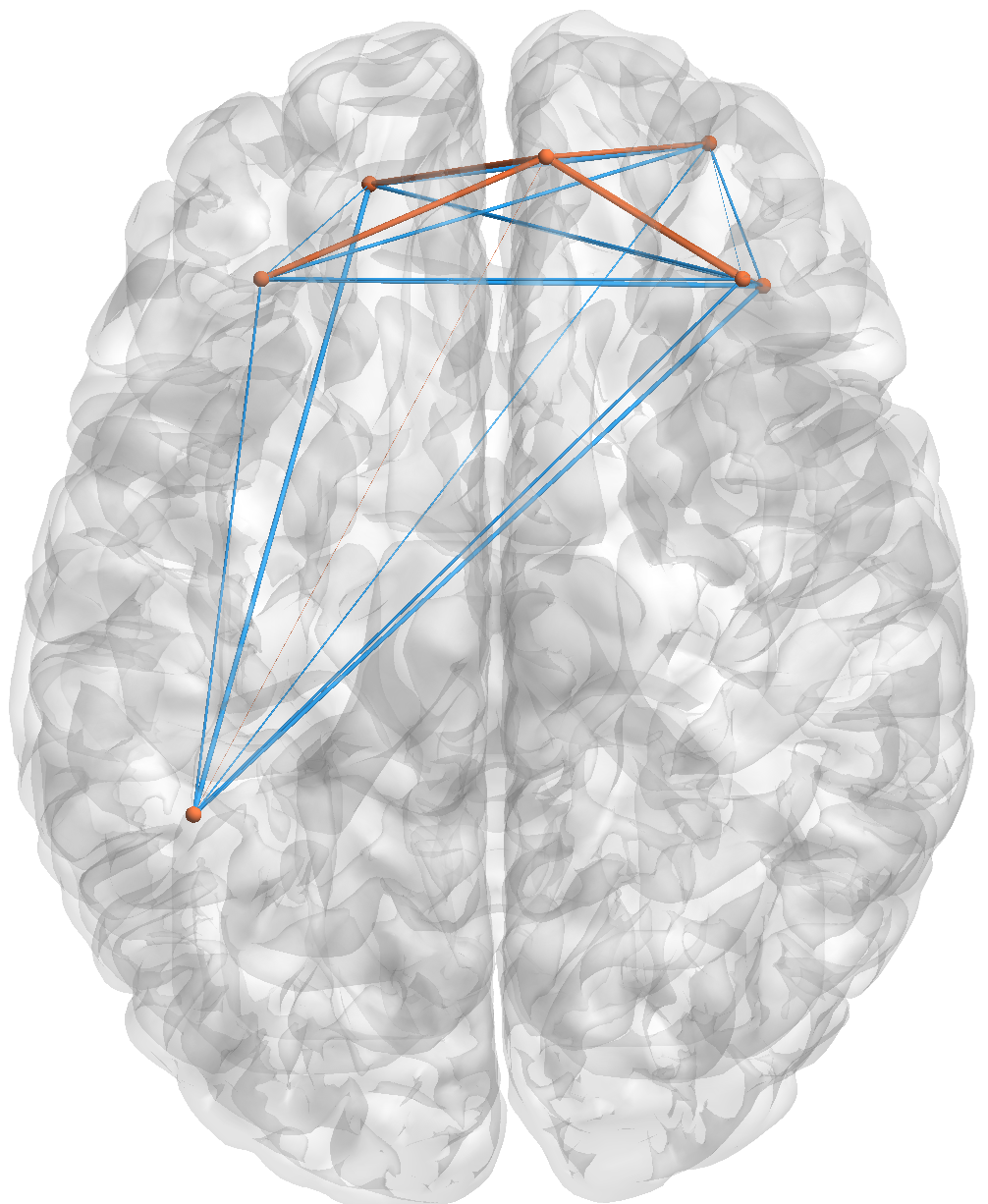}
		\end{tabular}
	\end{minipage}
	\begin{minipage}[c]{.68\linewidth}
		\includegraphics[width= \linewidth]{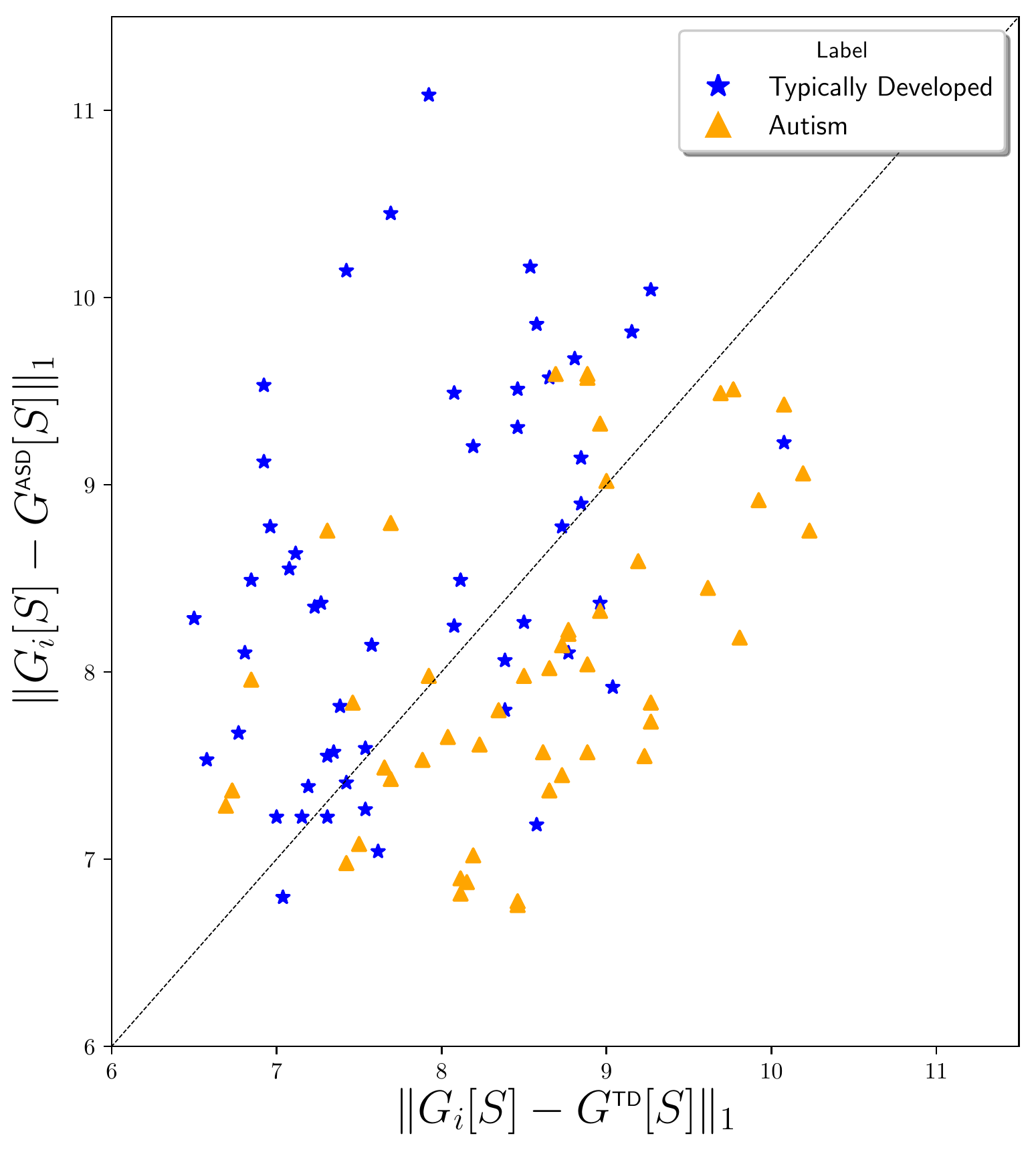}
	\end{minipage}
	\caption{Example of contrast subgraphs from the \highlight{Children} dataset according to \refpr{contrast_variant}. Top left: sagittal view. Bottom left: axial view. Thickness of edges is proportional to the absolute value of their weight in the \emph{difference graph}. Blue edges represent the ones that have positive weight in \diffnet{\td}{\asd}, orange ones in \diffnet{\asd}{\td}. Edges shown are the ones whose weight in absolute value is greater than 0.1.
	Such formulation of the problem highlights again that prefrontal cortex and posterior parietal cortex are regions of hyper-connectivity in \td patients, but also that at the same time such regions shows hyper-connectivity with medial prefrontal cortex in \asd patients, and hypo-connectivity in \td ones. Such contemporaneity is caught by the maximization of specularity that \refpr{contrast_variant} operates, instead of taking into account difference.
		\label{fig:results_alt}}
\end{figure}

\spara{Symmetric contrast subgraph (\refpr{contrast_variant}).}
In the case of the symmetric problem definition,
we extract a single contrast subgraph for the two classes.
An example is shown in \reffig{results_alt}:
blue edges show edges with positive weight in \diffnet{\td}{\asd},
and orange edges show edges with positive weight in \diffnet{\asd}{\td}.
There are seven {\roi}s involved in this solution:
five of those already appeared in the contrast subgraphs {\td}-{\asd},
previously shown in \reffig{results_large} and \reffig{results_small}.
All these five {\roi}s belong either to prefrontal cortex or to posterior parietal cortex,
which confirms that such {\roi}s are discriminative for the two classes.
The solution is completed by right middle frontal gyrus and superior frontal gyrus medial.
This finding is coherent with the results of Gilbert et al.~\cite{gilbert2008prefrontal}
who provide evidence that the medial prefrontal cortex has an anomalous behavior
in \asd patients during several tasks.

Given that in this formulation of the problem
there is a single contrast subgraph $S$ for the two classes,
to produce features to classify a patient $G_i$
we compute the $L_1$ norm between the subgraph induced by $S$ in $G_i$
and the subgraph induced by $S$ in \GTD and \GASD, respectively.
The results are shown in the scatter plot of \reffig{results_alt} (right),
where we show each graph $G_i$ with value $\|G_i[S] - \GTD[S]\|_1$ on the $x$-axis and
value $\|G_i[S] - \GASD[S]\|_1$ on the $y$-axis
Here, $G[S]$ denotes the subgraph induced by $S$ in $G$.
Also in this case, we can see a very good separability defined by the diagonal.

\subsection{Classification}
\label{sec:experiments}

We evaluate the effectiveness of contrast subgraphs in classifying
brain networks according to the target classes \td and \asd.
We compare with four baselines:
\highlight{graph2vec} \cite{narayanan2017graph2vec},
\highlight{sub2vec} \cite{adhikari2018sub2vec},
\highlight{WL-Kern} \cite{shervashidze2011weisfeiler}, and
\highlight{Embs} \cite{gutierrez2019embedding}.
All baselines provide a feature map, which we use for the classification task.
Details about the implementation and parameter tuning of these methods are reported in
Appendix~\ref{appendix:params}.

For our method,
we use the same features as 
in \reffig{results_small} (we refer to this approach as \highlight{CS-P1}) and
\reffig{results_alt} (we refer to this approach as \highlight{CS-P2}):
under both settings our method uses only two~features.

We employ the SVM-based classification process of Narayanan et al.~\cite{narayanan2017graph2vec}.
We randomly split the data into 80/20 training/test subsets.
Using the training set and 5-fold cross validation we select the best hyperparameters for the classifier.
We then apply the best classifier to the test set.
We repeat this process 5 times for each method.

We report the average accuracy for all the classification experiments in Table~\ref{tab:res1}.
We can observe the overall good performance of our method in all these tasks.
Moreover, a number of observations can be drawn:

\begin{description}
\item[Only two features:]
used as a proof of concept,
our method was tested employing only two features derived by contrast subgraphs.
In more complex classification tasks, however,
one can consider additional features from different contrast subgraphs
so as to improve the classification accuracy.

\smallskip

\item[A single parameter:]
our method has only one parameter to tune, i.e., $\alpha$,
while the other methods (with the exception of \highlight{WL-Kern}) require tuning of many parameters.

\smallskip

\item[Run-time efficiency:] extracting the contrast subgraphs from this type of brain connectome datasets
always take less than 30 seconds,
while the other methods (with the exception of \highlight{WL-Kern}) have longer running times.

\smallskip

\item[Explainability:]
in the pipeline we presented,
our method uses only two simple features,
instead of embedded features or convoluted kernels.
These features enable simple description in form of rule, as well as visualization, of the
 decision boundary discriminating between the two classes.
\end{description}
\smallskip

Figures~\ref{fig:appendix} and \ref{fig:appendix_alt} reports the decision boundaries for \highlight{CS-P1}  and \highlight{CS-P2} (respectively).

\begin{table}[t!]
	\caption{\label{tab:res1}Results of the experiments performed over the datasets described in \reftab{datasets}. Each value represents the average accuracy, along with its relative standard deviation.}

\vspace{-2mm}

	\begin{tabular}{rllll}
	\toprule
\multicolumn{1}{c}{}		& \multicolumn{1}{c}{\highlight{Children}} & \multicolumn{1}{c}{\highlight{Adolescents}} & \multicolumn{1}{c}{\highlight{EyesClosed}} & \multicolumn{1}{c}{\highlight{Male}}\\
	\midrule
		\highlight{CS-P1} & $\textbf{0.86} \pm 0.07$ & $\textbf{0.72} \pm 0.07$  & $0.71 \pm 0.03$ & $0.63 \pm 0.01$ \\ 
		\highlight{CS-P2} & $\textbf{0.86} \pm 0.04$ & $0.71 \pm 0.04$ & $\textbf{0.72} \pm 0.08$ & $\textbf{0.65} \pm 0.03$ \\ 
		\highlight{graph2vec} & $0.72 \pm 0.13$ & $0.65 \pm 0.05$ & $0.60 \pm 0.04$ & $0.56 \pm 0.02$ \\ 
		\highlight{sub2vec} & $0.66 \pm 0.01$ & $0.59 \pm 0.04$ & $0.60 \pm 0.01$  & $0.57 \pm 0.01$ \\ 
		\highlight{WL-Kern} & $0.52 \pm 0$ & $0.52 \pm 0$& $0.54 \pm 0$ & $0.50 \pm 0$ \\ 
		\highlight{Embs} & $0.70 \pm 0.12$ & $0.58 \pm 0.03$ & $0.59 \pm 0.04$  & $0.57 \pm 0.03$ \\ 	
	\bottomrule
	\end{tabular}
\end{table}

\begin{figure}[t!]
		\vspace{2mm}
	\centering
	\hspace{-4mm}\includegraphics[width=1.05\linewidth]{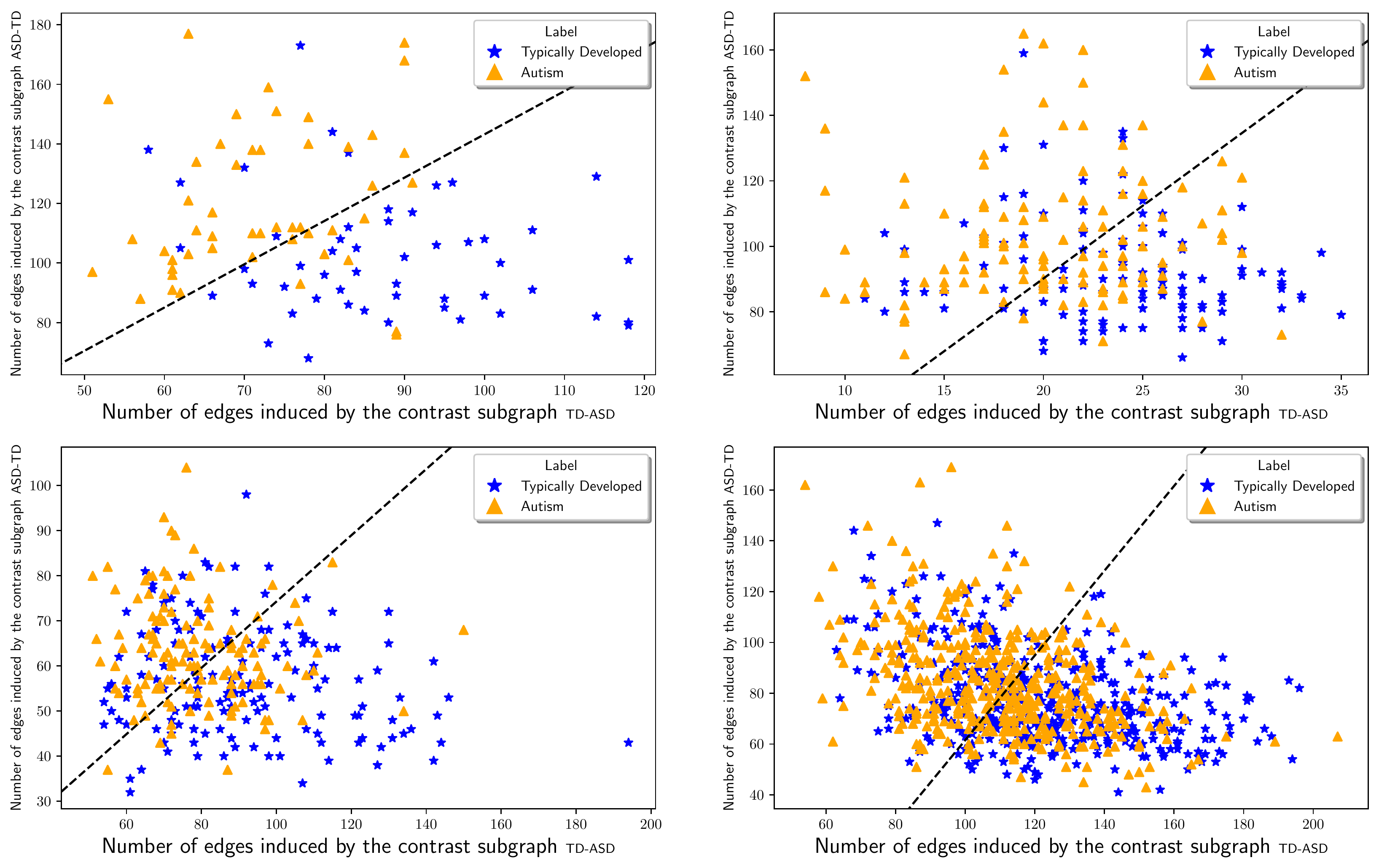}
	\vspace{-4mm}
	\caption{Decision boundaries for \highlight{CS-P1}. Top \highlight{Children} (left), \highlight{Adolescents} (right). Bottom \highlight{EyesClosed} (left), \highlight{Male} (right). 	\label{fig:appendix}}
		\vspace{2mm}
\end{figure}

\begin{figure}[t!]
	\centering
	\hspace{-4mm}\includegraphics[width=1.05\linewidth]{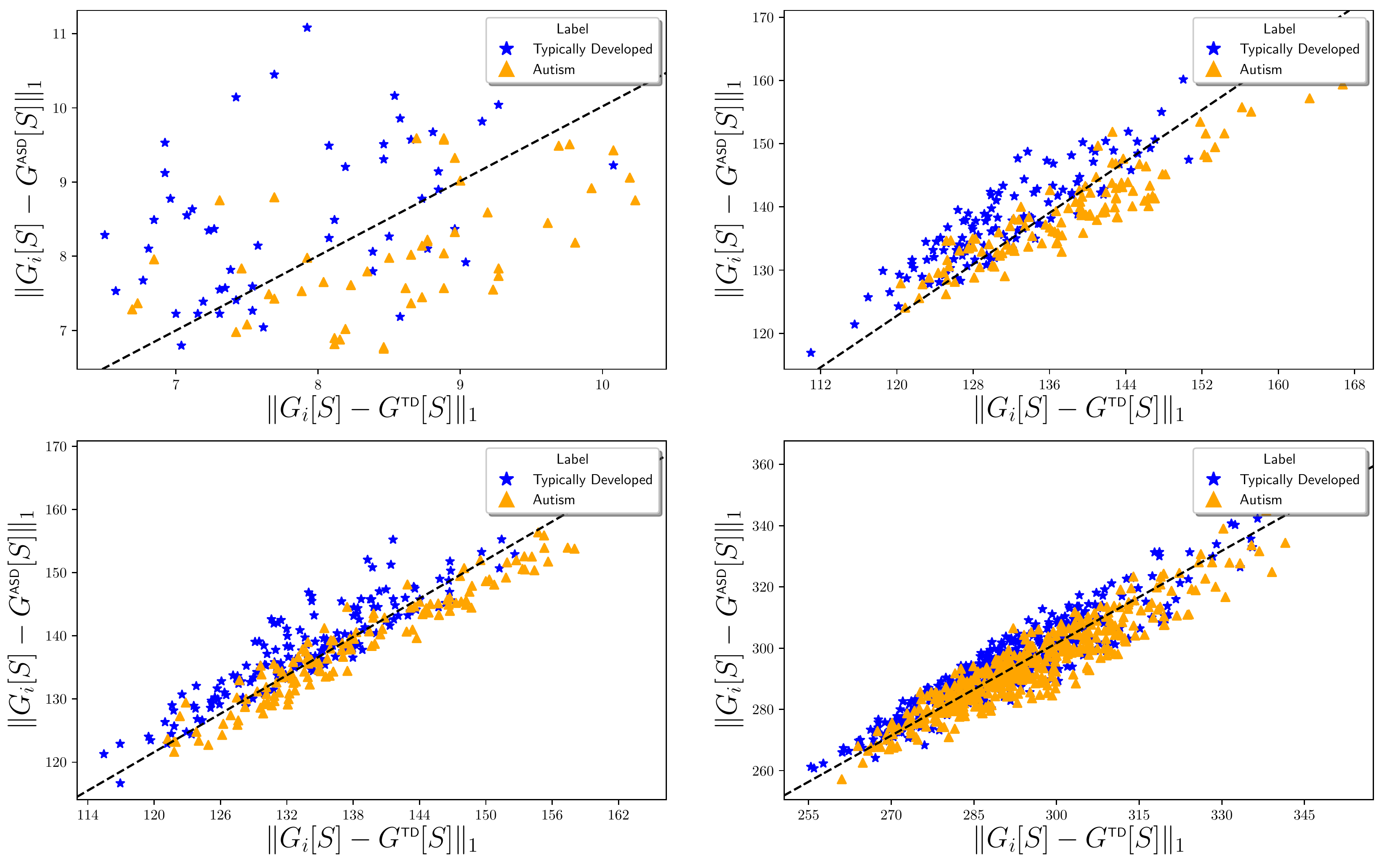}
	\vspace{-4mm}
	\caption{Decision boundaries for \highlight{CS-P2}. Top \highlight{Children} (left), \highlight{Adolescents} (right). Bottom \highlight{EyesClosed} (left), \highlight{Male} (right).  	\label{fig:appendix_alt}}
\end{figure}

\section{Conclusions and future work}
\label{sec:conclusions}
Learning models that are able to discriminate brains of patients affected by a mental disorder from those of healthy individuals, is attracting a lot of interest. However, often the accuracy of the models is a predominant goal over its interpretability, which is instead a key requirement in neuroscience.

In this paper we approach the task of brain-network classification with a two-fold goal:
to achieve good accuracy, but most importantly,
to identify discriminant brain patterns that lead a model to classify an individual.
The framework we propose,
based on the notion of \emph{contrast subgraph}, satisfies both of these conditions.
It outperforms several state-of-the-art competitors and returns
some very intuitive explanations, which can be shared with experts from the neuroscience field. Moreover, contrast subgraphs are exceptionally easy to compute, both in terms of runtime and memory.

\spara{Future work.} In this first work we focus on Autism Spectrum Disorder mostly because of the wide availability of public datasets. However, the proposed approach can be applied in several other contexts and different types of mental disorders. To push forward this research direction,
we have created an ongoing collaboration
with neuroscientists.
We are currently working with the domain experts to create
datasets with patients affected by Bipolar Disorder and by Schizophrenia.

On the algorithmic side, we plan to investigate the extraction of the top-$k$ contrast subgraphs, following an approach similar to that of Balalau et al.~\cite{BalalauBCGS15} for densest subgraphs.
Exploiting the information given by multiple contrast subgraphs, could help both to improve the classification performances, and to detect further interesting patterns that may not appear in the single solution. 

\spara{Acknowledgements.}
The authors wish to thank Edoardo Galimberti for many meaningful discussions in the early stage of this project.
Francesco Bonchi acknowledges support from Intesa Sanpaolo Innovation Center.
Aristides Gionis is supported by three Academy of Finland projects (286211, 313927, 317085),
the ERC Advanced Grant REBOUND (834862),
the EC H2020 RIA project ``SoBigData++'' (871042), and the
Wallenberg AI, Autonomous Systems and Software Program (WASP).
The funders had no role in study design, data collection and analysis, decision to publish, or preparation of the manuscript. 


\begin{thebibliography}{10}

\bibitem{adhikari2018sub2vec}
B.~Adhikari, Y.~Zhang, N.~Ramakrishnan, and B.~A. Prakash.
\newblock Sub2vec: Feature learning for subgraphs.
\newblock In {\em PAKDD 2018}.

\bibitem{AITT00}
Y.~Asahiro, K.~Iwama, H.~Tamaki, and T.~Tokuyama.
\newblock Greedily finding a dense subgraph.
\newblock {\em Journal of Algorithms}, 34(2):203--221, 2000.

\bibitem{BalalauBCGS15}
O.~D. Balalau, F.~Bonchi, T.~H. Chan, F.~Gullo, and M.~Sozio.
\newblock Finding subgraphs with maximum total density and limited overlap.
\newblock In {\em WSDM 2015}.

\bibitem{biswal1995fmri}
B.~Biswal, F.~Z. Yetkin, V.~M. Haughton, and J.~S. Hyde.
\newblock Functional connectivity in the motor cortex of resting human brain
  using echo-planar mri.
\newblock {\em Magnetic Resonance in Medicine}, 34(4):537--541, oct 1995.

\bibitem{boden2012mining}
B.~Boden, S.~G{\"u}nnemann, H.~Hoffmann, and T.~Seidl.
\newblock Mining coherent subgraphs in multi-layer graphs with edge labels.
\newblock In {\em KDD 2012}.

\bibitem{bullmore2009complex}
E.~Bullmore and O.~Sporns.
\newblock Complex brain networks: graph theoretical analysis of structural and
  functional systems.
\newblock {\em Nature reviews neuroscience}, 10(3):186--198, 2009.

\bibitem{cadena2016dense}
J.~Cadena, A.~K. Vullikanti, and C.~C. Aggarwal.
\newblock On dense subgraphs in signed network streams.
\newblock In {\em ICDM 2016}.

\bibitem{affinity}
L.~Chu, S.~Wang, S.~Liu, Q.~Huang, and J.~Pei.
\newblock {ALID:} scalable dominant cluster detection.
\newblock {\em {PVLDB}}, 8(8):826--837, 2015.

\bibitem{craddock2013abide}
C.~Craddock, Y.~Benhajali, C.~Chu, F.~Chouinard, A.~Evans, A.~Jakab, B.~S.
  Khundrakpam, J.~D. Lewis, Q.~Li, M.~Milham, C.~Yan, and P.~Bellec.
\newblock The neuro bureau preprocessing initiative: open sharing of
  preprocessed neuroimaging data and derivatives.
\newblock {\em Frontiers in Neuroinformatics}, (41), 2013.

\bibitem{du2018brainclassificationsurvey}
Y.~Du, Z.~Fu, and V.~D. Calhoun.
\newblock Classification and prediction of brain disorders using functional
  connectivity: Promising but challenging.
\newblock {\em Frontiers in Neuroscience}, 12:525, 2018.

\bibitem{eyler2012leftimpair}
L.~T. Eyler, K.~Pierce, and E.~Courchesne.
\newblock A failure of left temporal cortex to specialize for language is an
  early emerging and fundamental property of autism.
\newblock {\em Brain}, 135(3):949--960, Feb. 2012.

\bibitem{galimberti2017core}
E.~Galimberti, F.~Bonchi, and F.~Gullo.
\newblock Core decomposition and densest subgraph in multilayer networks.
\newblock In {\em CIKM 2017}.

\bibitem{gilbert2008prefrontal}
S.~J. Gilbert, G.~Bird, R.~Brindley, C.~D. Frith, and P.~W. Burgess.
\newblock Atypical recruitment of medial prefrontal cortex in autism spectrum
  disorders: An {fMRI} study of two executive function tasks.
\newblock {\em Neuropsychologia}, 46(9):2281--2291, July 2008.

\bibitem{gionis2015dense}
A.~Gionis and C.~E. Tsourakakis.
\newblock Dense subgraph discovery: Kdd 2015 tutorial.
\newblock In {\em KDD 2015}.

\bibitem{Goldberg84}
A.~V. Goldberg.
\newblock Finding a maximum density subgraph.
\newblock Technical report, University of California at Berkeley, 1984.

\bibitem{gutierrez2019embedding}
L.~Guti{\'{e}}rrez-G{\'{o}}mez and J.-C. Delvenne.
\newblock Unsupervised network embeddings with node identity awareness.
\newblock {\em Applied Network Science}, 4(1), oct 2019.

\bibitem{jiang2009mining}
D.~Jiang and J.~Pei.
\newblock Mining frequent cross-graph quasi-cliques.
\newblock {\em TKDD}, 2(4):16, 2009.

\bibitem{LTS}
N.~{Jin} and W.~{Wang}.
\newblock Lts: Discriminative subgraph mining by learning from search history.
\newblock In {\em ICDE 2011}.

\bibitem{khan2015cerebellum}
A.~J. Khan, A.~Nair, C.~L. Keown, M.~C. Datko, A.~J. Lincoln, and R.-A.
  M{\"u}ller.
\newblock Cerebro-cerebellar resting-state functional connectivity in children
  and adolescents with autism spectrum disorder.
\newblock {\em Biological Psychiatry}, 78(9):625 -- 634, 2015.
\newblock Adolescence, Cortical Functional Connectivity, and Psychopathology.

\bibitem{kleinhans2008leftimpair}
N.~M. Kleinhans, R.-A. M{\"u}ller, D.~N. Cohen, and E.~Courchesne.
\newblock Atypical functional lateralization of language in autism spectrum
  disorders.
\newblock {\em Brain Research}, 1221:115 -- 125, 2008.

\bibitem{lang2012preprocessing}
E.~W. Lang, A.~M. Tom{\'{e}}, I.~R. Keck, J.~M. G{\'{o}}rriz-S{\'{a}}ez, and
  C.~G. Puntonet.
\newblock Brain connectivity analysis: A short survey.
\newblock {\em Computational Intelligence and Neuroscience}, 2012:1--21, 2012.

\bibitem{lee2018graphclass}
J.~B. Lee, R.~Rossi, and X.~Kong.
\newblock Graph classification using structural attention.
\newblock In {\em KDD 2018}.

\bibitem{lord2012percentile}
A.~Lord, D.~Horn, M.~Breakspear, and M.~Walter.
\newblock Changes in community structure of resting state functional
  connectivity in unipolar depression.
\newblock {\em PLOS ONE}, 7(8):1--15, 08 2012.

\bibitem{dimartino2011striatum}
A.~D. Martino, C.~Kelly, R.~Grzadzinski, X.-N. Zuo, M.~Mennes, M.~A. Mairena,
  C.~Lord, F.~X. Castellanos, and M.~P. Milham.
\newblock Aberrant striatal functional connectivity in children with autism.
\newblock {\em Biological Psychiatry}, 69(9):847 -- 856, 2011.
\newblock Genes, Autism, and Associated Phenotypes.

\bibitem{meng2018brainclassification}
L.~Meng and J.~Xiang.
\newblock Brain network analysis and classification based on convolutional
  neural network.
\newblock {\em Frontiers in Computational Neuroscience}, 12, dec 2018.

\bibitem{misman2019brainclassification}
M.~F. {Misman}, A.~A. {Samah}, F.~A. {Ezudin}, H.~A. {Majid}, Z.~A. {Shah},
  H.~{Hashim}, and M.~F. {Harun}.
\newblock Classification of adults with autism spectrum disorder using deep
  neural network.
\newblock In {\em 2019 1st International Conference on Artificial Intelligence
  and Data Sciences (AiDAS)}, pages 29--34, Sep. 2019.

\bibitem{narayanan2017graph2vec}
A.~Narayanan, M.~Chandramohan, R.~Venkatesan, L.~Chen, Y.~Liu, and S.~Jaiswal.
\newblock graph2vec: Learning distributed representations of graphs, 2017.

\bibitem{nielsen2014leftimpair}
J.~A. Nielsen, B.~A. Zielinski, P.~Fletcher, A.~L. Alexander, N.~Lange, E.~D.
  Bigler, J.~E. Lainhart, and J.~S. Anderson.
\newblock Abnormal lateralization of functional connectivity between language
  and default mode regions in autism.
\newblock {\em Molecular Autism}, 5(1):8, 2014.

\bibitem{PeiJZ05}
J.~Pei, D.~Jiang, and A.~Zhang.
\newblock On mining cross-graph quasi-cliques.
\newblock In {\em KDD 2005}.

\bibitem{rubinov2009percentile}
M.~Rubinov, S.~A. Knock, C.~J. Stam, S.~Micheloyannis, A.~W. Harris, L.~M.
  Williams, and M.~Breakspear.
\newblock Small-world properties of nonlinear brain activity in schizophrenia.
\newblock {\em Human Brain Mapping}, 30(2):403--416, 2009.

\bibitem{shervashidze2011weisfeiler}
N.~Shervashidze, P.~Schweitzer, E.~J.~v. Leeuwen, K.~Mehlhorn, and K.~M.
  Borgwardt.
\newblock Weisfeiler-lehman graph kernels.
\newblock {\em Journal of Machine Learning Research}, 12(Sep):2539--2561, 2011.

\bibitem{sporns2005human}
O.~Sporns, G.~Tononi, and R.~K{\"o}tter.
\newblock The human connectome: a structural description of the human brain.
\newblock {\em PLoS Comput Biol}, 1(4), 2005.

\bibitem{CORK}
M.~Thoma, H.~Cheng, A.~Gretton, J.~Han, H.-P. Kriegel, A.~Smola, L.~Song, P.~S.
  Yu, X.~Yan, and K.~M. Borgwardt.
\newblock Discriminative frequent subgraph mining with optimality guarantees.
\newblock {\em Statistical Analysis and Data Mining: The ASA Data Science
  Journal}, 3(5):302--318, 2010.

\bibitem{Bailey06}
R.~M.~H. Ting and J.~Bailey.
\newblock Mining minimal contrast subgraph patterns.
\newblock In {\em Proceedings of the 2006 SIAM International Conference on Data
  Mining}, pages 639--643, 2006.

\bibitem{tsourakakis2015}
C.~Tsourakakis.
\newblock Streaming Graph Partitioning in the Planted Partition Model. 
\newblock In {\em COSN 2015}.

\bibitem{tsourakakis2013denser}
C.~Tsourakakis, F.~Bonchi, A.~Gionis, F.~Gullo, and M.~Tsiarli.
\newblock Denser than the densest subgraph: extracting optimal quasi-cliques
  with quality guarantees.
\newblock In {\em KDD 2013}.

\bibitem{tzouriomayer2002aal}
N.~Tzourio-Mazoyer, B.~Landeau, D.~Papathanassiou, F.~Crivello, O.~Etard,
  N.~Delcroix, B.~Mazoyer, and M.~Joliot.
\newblock Automated anatomical labeling of activations in spm using a
  macroscopic anatomical parcellation of the mni mri single-subject brain.
\newblock {\em NeuroImage}, 15(1):273 -- 289, 2002.

\bibitem{wang2017brainclassification}
S.~Wang, L.~He, B.~Cao, C.-T. Lu, P.~S. Yu, and A.~B. Ragin.
\newblock Structural deep brain network mining.
\newblock In {\em KDD 2017}.

\bibitem{wu2019demonet}
J.~Wu, J.~He, and J.~Xu.
\newblock Demo-net: Degree-specific graph neural networks for node and graph
  classification.
\newblock In {\em KDD 2019}.

\bibitem{WuZLFJZ16}
Y.~Wu, X.~Zhu, L.~Li, W.~Fan, R.~Jin, and X.~Zhang.
\newblock Mining dual networks: Models, algorithms, and applications.
\newblock {\em {TKDD}}, 10(4):40:1--40:37, 2016.

\bibitem{LEAP}
X.~Yan, H.~Cheng, J.~Han, and P.~S. Yu.
\newblock Mining significant graph patterns by leap search.
\newblock In {\em SIGMOD 2008}.

\bibitem{yan2019groupinn}
Y.~Yan, J.~Zhu, M.~Duda, E.~Solarz, C.~Sripada, and D.~Koutra.
\newblock Groupinn: Grouping-based interpretable neural network for
  classification of limited, noisy brain data.
\newblock In {\em KDD 2019}.

\bibitem{yanardag2015deepWL}
P.~Yanardag and S.~Vishwanathan.
\newblock Deep graph kernels.
\newblock In {\em KDD 2015}.

\bibitem{YangCZWPC18}
Y.~Yang, L.~Chu, Y.~Zhang, Z.~Wang, J.~Pei, and E.~Chen.
\newblock Mining density contrast subgraphs.
\newblock In {\em {ICDE} 2018}.

\end{thebibliography}

\newpage
\appendix
\section{Reproducibility}
\label{sec:appendix}
\subsection{Pseudocode}
\label{appendix:pseudocode}

We provide pseudocodes of the algorithms used for computing contrast subgraphs,
according to Section \ref{sec:algorithms}.
In particular,
Algorithm~\ref{alg:cs} solves \refpr{contrast},
Algorithm~\ref{alg:cs_alt} solves \refpr{contrast_variant},
Algorithm~\ref{alg:densdp} displays the algorithm DENSDP by Cadena et al.~\cite{cadena2016dense} based on the semi-definite programming (SDP) routine,
while Algorithm~\ref{alg:localsearch} describes the local-search routine by Tsourakakis et al.~\cite{tsourakakis2013denser}, which refines the solution produced by the SDP method.

\smallskip

\begin{mdframed}[innerbottommargin=3pt,innertopmargin=3pt,innerleftmargin=6pt,innerrightmargin=6pt,backgroundcolor=gray!10,roundcorner=10pt]
All our code and datasets are made available at:
{\center {\url{https://github.com/tlancian/contrast-subgraph}}}
\end{mdframed}

Algorithms \ref{alg:cs} and \ref{alg:cs_alt} simply apply Proposition \ref{fact:mapping} in order to map
Problem \ref{prb:contrast} and \ref{prb:contrast_variant} (respectively) to Problem \ref{prb:edsn}, and then invoke Algorithm \ref{alg:densdp} over the difference graph $\diffnet{\calA}{\calB}$ and with the $\overline{\alpha}$ function taking constantly the $\alpha$ parameter of our problems.

The subroutine dubbed SDP in Algorithm \ref{alg:densdp}, corresponds to the following semi-definite programming:

\begin{equation*}
\begin{aligned}
& \underset{(u,z) \in E}{\text{max}}
& & \sum_{(u,z) \in E} w(u,z)\Big(\dfrac{1+\vec{v}_u \vec{v}_0 + \vec{v}_z \vec{v}_0 + \vec{v}_u \vec{v}_z}{4}\Big) -\\
& & & \sum_{u,z \in V, u \neq z} \overline{\alpha}(u,z)\Big(\dfrac{1+\vec{v}_u \vec{v}_0 + \vec{v}_z \vec{v}_0 + \vec{v}_u \vec{v}_z}{4}\Big) \\
& \text{s.t.} &  &  & \\
& &  & \vec{v}_{u}^T \cdot \vec{v}_u = 1 & \\
& &  & \vec{v}_0, \vec{v}_u \in \mathbb{R}^{n+1}. & \\
\end{aligned}
\end{equation*}

The function $f_{\alpha}(\cdot)$ in Algorithm \ref{alg:localsearch} corresponds to the notion of \emph{edge-surplus},
mentioned in \refsec{algorithms}, Equation~(\ref{eq:surplus}),
by Tsourakakis et al.~\cite{tsourakakis2013denser}.

\begin{algorithm}[h]
	\caption{Contrast Subgraph (\refpr{contrast})} \label{alg:cs}
	\begin{algorithmic}[1]
		\REQUIRE $\calA=\{G_1^\A,\ldots,G_{r_\A}^\A\}$; $\calB=\{G_1^\B,\ldots,G_{r_\B}^\B\}$; $\alpha \in \realsnn$
		\ENSURE $S^* \subseteq V$, i.e., the solution to \refpr{contrast}
\FORALL{$(u,v) \in V \times V$}		
\STATE $\wA(u,v) \gets \frac{1}{r_{\A}} \left| G_i^\A \in \calA \text{ s.t. }  (u,v)\in E_i^\A \right|$ $\;\;$ \hfill\COMMENT{\texttt{Eq.(\ref{eq:summary})}}
\STATE $\wB(u,v) \gets \frac{1}{r_{\B}} \left| G_i^\B \in \calB \text{ s.t. }  (u,v)\in E_i^\B \right|$
\ENDFOR
		\STATE \diffnet{\calA}{\calB} $\gets (V, \wA-\wB)$ \hfill\COMMENT{\texttt{Prop. \ref{fact:mapping}}}
		\STATE $S^* \gets$ DENSDP(\diffnet{\calA}{\calB},  $\overline{\alpha}(\cdot) = \alpha$)  \hfill\COMMENT{\texttt{Alg.\ref{alg:densdp}}}
	\end{algorithmic}
\end{algorithm}

\begin{algorithm}[h]
	\caption{Symmetric Contrast Subgraph (\refpr{contrast_variant})} \label{alg:cs_alt}
	\begin{algorithmic}[1]
				\REQUIRE $\calA=\{G_1^\A,\ldots,G_{r_\A}^\A\}$; $\calB=\{G_1^\B,\ldots,G_{r_\B}^\B\}$; $\alpha \in \realsnn$
		\ENSURE $S^* \subseteq V$, i.e., the solution to \refpr{contrast_variant}
		\FORALL{$(u,v) \in V \times V$}		
\STATE $\wA(u,v) \gets \frac{1}{r_{\A}} \left| G_i^\A \in \calA \text{ s.t. }  (u,v)\in E_i^\A \right|$  \hfill\COMMENT{\texttt{Eq.(\ref{eq:summary})}}
\STATE $\wB(u,v) \gets \frac{1}{r_{\B}} \left| G_i^\B \in \calB \text{ s.t. }  (u,v)\in E_i^\B \right|$
\ENDFOR
		\STATE \diffnet{\calA}{\calB} $\gets (V, |\wA-\wB|)$  \hfill\COMMENT{\texttt{Prop. \ref{fact:mapping}}}
		\STATE $S^* \gets$ DENSDP(\diffnet{\calA}{\calB}, $\overline{\alpha}(\cdot) = \alpha$) \hfill\COMMENT{\texttt{Alg.\ref{alg:densdp}}}
	\end{algorithmic}
\end{algorithm}

\begin{algorithm}[h]
	\caption{DENSDP \cite{cadena2016dense}} \label{alg:densdp}
	\begin{algorithmic}[1]
		\REQUIRE Weighted graph $G=(V,w)$; function $\overline{\alpha}: V \times V \rightarrow \realsnn$
		\ENSURE $S^* \subseteq V$, i.e., the solution to Problem~\ref{prb:edsn} 
		
		\STATE $ \mathcal{V} \gets \text{SDP}(G, \overline{\alpha}), \text{ where }
		  \mathcal{V} = \{\vec{v}_0\} \cup \{ \vec{v}_u \mid u \in V\}$
		\STATE Sample $\vec{r}$ from $\mathcal{N} \sim (\vec{0}_{(|V|+1)}, I_{(|V|+1)\times(|V|+1)})$
		\STATE $T \gets \sqrt{4 \log |V|}$
		\FOR {$u \in V$}
			\STATE $z_u \gets (\vec{v}_u \cdot \vec{r}) /T$
			\STATE \textbf{if} $|z_u|>1$ \textbf{then} $y_u \gets z_u/|z_u|$ \textbf{else} $y_u \gets z_u$			
			\STATE $x_u \gets \begin{cases}
1, & \text{ with probability } \frac{1+y_u}{2}\\
-1 & \text{ with probability } \frac{1-y_u}{2}
\end{cases}
$
		\ENDFOR
		
		\STATE $S' \gets \{u \mid x_u = 1\}$
		\STATE $S \gets $LocalSearch$(G' = (S',w), \alpha = \overline{\alpha}(\cdot))$ \hfill\COMMENT{\texttt{Alg.(\ref{alg:localsearch})}}
	
	\end{algorithmic}
\end{algorithm}

\begin{algorithm}[h]
	\caption{LocalSearch \cite{tsourakakis2013denser}} \label{alg:localsearch}
	\begin{algorithmic}[1]
		\REQUIRE Weighted graph $G=(V,w)$, $\alpha \in \realsnn$,  maximum number of iterations $T_{MAX}$
		\ENSURE $S^* \subseteq V$
		
		\STATE $S \gets \{v\}$, where $v$ is chosen uniformly at random
		\STATE $b_1 \gets$ TRUE, $t \gets 1$
		\WHILE {$b_1$ \textbf{and} $t \leq T_{MAX}$}
		
			\STATE $b_2 \gets $TRUE
			\WHILE {$b_2$}
				\IF {$\exists u \in V \setminus S$ s.t. $f_{\alpha}(S \cup \{u\}) \geq f_{\alpha}(S)$}
			
				\STATE $S \gets S \cup \{u\}$
			
				\ELSE
			
				\STATE $b_2 \gets$ FALSE
			
				\ENDIF

			\ENDWHILE
		
			\IF {$\exists u \in S$ s.t. $f_{\alpha}(S \setminus \{u\}) \geq f_{\alpha}(S)$}
		
			\STATE $S \gets S \setminus \{u\}$
		
			\ELSE
			\STATE $b_1 \gets$ FALSE
			\STATE $t \gets t+1$
			\ENDIF
		
		\ENDWHILE
		
		\STATE $S^* \gets \argmax_{\hat{S} \in \{S,V \setminus S\}} f_{\alpha}(\hat{S})$
	\end{algorithmic}
\end{algorithm}

\subsection{Competitors and parameters tuning}\label{appendix:params}

In this section we provide further details of the experimental setup we described in \refsec{experiments}.

\spara{Embeddings.}
In order to obtain embeddings for the baseline methods,
we followed different strategies.
\highlight{graph2vec} \cite{narayanan2017graph2vec} and
\highlight{Embs} \cite{gutierrez2019embedding}
provide directly embeddings for a given set of networks.
\highlight{sub2vec} \cite{adhikari2018sub2vec} provides an embedding for a set of subgraphs in a single network, therefore we employed the strategy of Adhikari et al~\cite{adhikari2018sub2vec},
and consider any brain network to be a specific subgraph of a single network made
by the union of all these subgraphs.
Regarding \highlight{WL-Kern} \cite{shervashidze2011weisfeiler},
we employed the feature map it produces for computing subsequently the kernel matrix.

\spara{Code source and parameters.} 
We next provide links to source code of all the algorithms compared, together with the parameter settings used in the experiments. The ones not mentioned in this section,
have been left to their default value in their respective sources. We have approached to the parameter tuning, taking values around the default value set by the respective authors. For every algorithms listed here, we have performed the experiment for any possible combination of this parameters, retaining only the one that led to the best results, in term of average accuracy over 5 experiments.

\spara{\highlight{CS-P1}}

\noindent Source: \url{https://github.com/tlancian/contrast-subgraph}

\noindent Parameters tuned:

\begin{itemize}
	\item $\alpha \text{ for \td-\asd}: \{70, 75, 80, 85, 90, 93, 95, 97\}$
	\item $\alpha \text{ for \asd-\td}: \{70, 75, 80, 85, 90, 93, 95, 97\}$
\end{itemize}

\noindent Best Parameters:

\begin{table}[h]
	
	\vspace{-2mm}
	
	\begin{tabular}{rcccc}
		\toprule
		\multicolumn{1}{c}{}		& \multicolumn{1}{c}{\highlight{Children}} & \multicolumn{1}{c}{\highlight{Adolescents}} & \multicolumn{1}{c}{\highlight{EyesClosed}} & \multicolumn{1}{c}{\highlight{Male}}\\
		\midrule
		 $\alpha \text{ for \td-\asd}$ & 70 & 95 & 75 & 75  \\ 
		 $\alpha \text{ for \asd-\td}$ & 80 & 70 & 75 & 70 \\ 
		\bottomrule
	\end{tabular}
\end{table}

In addition, we report that the $\alpha$ values used to produce Figure~\ref{fig:results_large} are the same that resulted to be the best results for \highlight{Children}. The ones for \reffig{results_small} are respectively 90 and 93.

\spara{\highlight{CS-P2}}

\noindent Source: \url{https://github.com/tlancian/contrast-subgraph}

\noindent Parameters tuned:

\begin{itemize}
	\item $\alpha: \{70, 75, 80, 85, 90, 93, 95, 97\}$
\end{itemize}

\noindent Best Parameters:

\begin{table}[h]
	
	\vspace{-2mm}
	
	\begin{tabular}{rcccc}
		\toprule
		\multicolumn{1}{c}{}		& \multicolumn{1}{c}{\highlight{Children}} & \multicolumn{1}{c}{\highlight{Adolescents}} & \multicolumn{1}{c}{\highlight{EyesClosed}} & \multicolumn{1}{c}{\highlight{Male}}\\
		\midrule
		$\alpha$ & 70 & 75 & 70 & 70  \\ 
		\bottomrule
	\end{tabular}
\end{table}

In addition, we report that the $\alpha$ used to produce Figure~\ref{fig:results_alt} is the same that resulted to be the best results for \highlight{Children}.

\spara{\highlight{graph2vec} \cite{narayanan2017graph2vec}}

\noindent Source: \url{https://github.com/MLDroid/graph2vec_tf}

\noindent Parameters tuned:

\begin{itemize}
	\item Batch Size: \{64, 128, 256, 512\}
	\item Epochs: \{100\}
	\item Embedding Size: \{256, 512, 1024, 2048\}
	\item Negative Samples: \{10, 20, 30, 40\}
	\item Height WL Kernel: \{3, 4, 5\}
\end{itemize}

\noindent Best Parameters:

\begin{table}[h]

	\vspace{-2mm}
	
	\begin{tabular}{rcccc}
		\toprule
		\multicolumn{1}{c}{}		& \multicolumn{1}{c}{\highlight{Children}} & \multicolumn{1}{c}{\highlight{Adolescents}} & \multicolumn{1}{c}{\highlight{EyesClosed}} & \multicolumn{1}{c}{\highlight{Male}}\\
		\midrule
		Batch Size & 256 & 256 & 64 & 128  \\ 
		Embedding Size & 256 & 2048 & 512 & 1024 \\ 
		Negative Samples & 30 & 30 & 30 & 20 \\ 
		Height WL Kernel & 3 & 3 & 3 & 3 \\ 
		\bottomrule
	\end{tabular}
\end{table}

\pagebreak

\spara{\highlight{sub2vec} \cite{adhikari2018sub2vec}}

\noindent Source: \url{https://goo.gl/Ef4q8g}

\noindent Parameters tuned:

\begin{itemize}
	\item Property: \{Neighborhood, Structural\}
	\item Walk Length: \{100000, 150000, 200000\}
	\item Embedding Size: \{128, 256, 512\}
	\item Model: \{DBON, DM\}
	\item Training Iterations: \{10, 20\}
\end{itemize}

\noindent Best Parameters:

\begin{table}[h]
	
	\vspace{-2mm}
	
	\begin{tabular}{rcccc}
		\toprule
		\multicolumn{1}{c}{}		& \multicolumn{1}{c}{\highlight{Children}} & \multicolumn{1}{c}{\highlight{Adolescents}} & \multicolumn{1}{c}{\highlight{EyesClosed}} & \multicolumn{1}{c}{\highlight{Male}}\\
		\midrule
		Property & S & S & S & S  \\ 
		Walk Length & 100000 & 150000 & 150000 & 150000 \\ 
		Embedding Size & 512 & 128 & 128 & 128 \\ 
		Model & DM & DM & DBON & DBON \\ 
		Training Iterations & 10 & 20 & 20 & 20 \\ 
		\bottomrule
	\end{tabular}
\end{table}

\spara{\highlight{Embs} \cite{gutierrez2019embedding}}

\noindent Source: \url{https://github.com/leoguti85/GraphEmbs}

\noindent Parameters tuned:

\begin{itemize}
	\item Epochs: \{100, 300, 500\}
	\item Batch Size: \{64, 128, 256\}
	\item Embedding Size: \{128, 256, 512, 1024\}
	\item Noise: \{0.3, 0.5\}
\end{itemize}

\noindent Best Parameters:

\begin{table}[h]
	
	\vspace{-2mm}
	
	\begin{tabular}{rcccc}
		\toprule
		\multicolumn{1}{c}{}		& \multicolumn{1}{c}{\highlight{Children}} & \multicolumn{1}{c}{\highlight{Adolescents}} & \multicolumn{1}{c}{\highlight{EyesClosed}} & \multicolumn{1}{c}{\highlight{Male}}\\
		\midrule
		Epochs & 500 & 500 & 300 & 100  \\ 
		Batch Size & 64 & 128 & 256 & 256 \\ 
		Embedding Size & 1024 & 1024 & 1024 & 1024 \\ 
		Noise & 0.5 & 0.3 & 0.5 & 0.3 \\ 
		\bottomrule
	\end{tabular}
\end{table}

\spara{\highlight{WL-Kern} \cite{shervashidze2011weisfeiler}}

\noindent Source: \url{http://mlcb.is.tuebingen.mpg.de/Mitarbeiter/Nino/WL}

\noindent Parameters tuned:

\begin{itemize}
	\item $h: \{1, 2, 3, 4, 5\}$
\end{itemize}

\noindent Best Parameters:

\begin{table}[h]
	
	\vspace{-2mm}
	
	\begin{tabular}{rcccc}
		\toprule
		\multicolumn{1}{c}{}		& \multicolumn{1}{c}{\highlight{Children}} & \multicolumn{1}{c}{\highlight{Adolescents}} & \multicolumn{1}{c}{\highlight{EyesClosed}} & \multicolumn{1}{c}{\highlight{Male}}\\
		\midrule
		h & 3 & 3 & 3 & 3  \\ 
		\bottomrule
	\end{tabular}
\end{table}

\end{document}